\newlength{\dhatheight}
\setlist{  
  listparindent=\parindent,
  parsep=0pt,
}
\newtheorem*{remark}{Remark}
\DeclareMathOperator*{\argmax}{arg\,max}
\DeclareMathOperator*{\argmin}{arg\,min}
\numberwithin{equation}{section}
\theoremstyle{plain}
\newtheorem{thm}{Theorem}
\newtheorem{definition}{Definition}
\newtheorem{lemma}{Lemma}
\newcommand\barbelow[1]{\stackunder[1.2pt]{$#1$}{\rule{.8ex}{.075ex}}}
\begin{document}


\begin{frontmatter}
\title{Bayesian Multivariate Quantile Regression Using Dependent Dirichlet Process Prior}

\runtitle{Bayesian Multiple-Output Quantile Regression}

\begin{aug}
\author{\fnms{Indrabati} \snm{Bhattacharya}\thanksref{addr1}
\ead[label=e1]{ibhatta@ncsu.edu}
}
\and
\author{\fnms{Subhashis} \snm{Ghosal}
\thanksref{addr1}
\ead[label=e2]{sghosal@ncsu.edu}}

\runauthor{Bhattacharya and Ghosal}

\address[addr1]{North Carolina State University\\
\printead{e1} 
    \printead*{e2}}


\end{aug}

\begin{abstract}
In this article, we consider a non-parametric Bayesian approach to multivariate quantile regression. The collection of related conditional distributions of a response vector $Y$ given a univariate covariate $X$ is modeled using a Dependent Dirichlet Process (DDP) prior. The DDP is used to introduce dependence across $x$. As the realizations from a Dirichlet process prior are almost surely discrete, we need to convolve it with a kernel. To model the error distribution as flexibly as possible, we use a countable mixture of multidimensional normal distributions as our kernel. For posterior computations, we use a truncated stick-breaking representation of the DDP. This approximation enables us to deal with only a finitely number of parameters. We use a Block Gibbs sampler for estimating the model parameters. We illustrate our method with simulation studies and real data applications. Finally, we provide a theoretical justification for the proposed method through posterior consistency. Our proposed procedure is new even when the response is univariate.
\end{abstract}

\begin{keyword}
\kwd{Bayesian Quantile Regression}
\kwd{Dependent Dirichlet Process}
\kwd{Stick-breaking}
\end{keyword}

\end{frontmatter}




\section{Introduction}
Quantile regression is a popular alternative to the usual mean regression which models the relationship between the predictor and a specific quantile of the response. Univariate linear quantile regression was first proposed by \cite{koenker1978regression} and was extensively studied in the literature since then. Given a covariate $X \in \mathbb{R}^m$, the $\alpha$th linear quantile regression model for the response $Y \in \mathbb{R}$ can be written as $Q_{Y\vert x}(\alpha)=x^T\beta$, for $\alpha \in (0,1)$. Based on a sample $(X_1,Y_1),\dots, (X_n,Y_n)$, \cite{koenker1978regression} estimated the regression coefficient $\beta$ by the estimator
\begin{equation}\label{eq:1}
    \hat{\beta}=\argmin_{b \in \mathbb{R}^m}\sum_{i=1}^n\rho_\alpha(Y_i-X_i^Tb),
\end{equation}
with $\rho_\alpha(u)=u(\alpha-\mathbbm{1}(u<0))$, where $\mathbbm{1}$ is the indicator function. Fast algorithms to compute $\hat{\beta}$ were obtained in the literature; there is an R package called \textit{quantreg}.

\cite{yu2001bayesian} proposed a parametric Bayesian approach to quantile regression by assuming an asymmetric Laplace likelihood. As \cite{chang2015nonparametric} mentioned, modeling the error distribution directly by an asymmetric Laplace distribution is too restrictive. To avoid this restrictive parametric assumption, a number of non-parametric Bayesian approaches have been developed in the literature. A lot of these non-parametric methods are based on Dirichlet process mixture (DPM) models; see \cite{chang2015nonparametric} for a comprehensive review and references.

Univariate quantiles can be extended to the multivariate setting in a number of ways. There is no unique definition of quantiles in higher dimension because of the lack of a natural ordering of Euclidean space in higher dimension. \cite{chaudhuri1996geometric} introduced the notion of geometric quantiles which arises as a natural generalization of the spatial median (see \cite{small1990survey} for a review on spatial median). \cite{chakraborty2003multivariate} extended this idea to a regression framework. There are various other ways to define a multivariate quantile (\cite{serfling2002quantile}). \cite{hallin2010multivariate} introduced the notion of directional quantiles for multivariate location and multiple-output regression problems. The Bayesian literature on multivariate quantiles is very limited; only a few papers exist to the authors' knowledge. \cite{waldmann2015bayesian} considered bivariate quantile regression using a multivariate asymmetric Laplace likelihood; while \cite{drovandi2011likelihood} used a copula approach. Recently, \cite{guggisberg2019bayesian} proposed a Bayesian approach to the directional quantile framework developed in \cite{hallin2010multivariate}.

Our approach here is non-parametric, i.e., we model the collection of conditional distributions of the response $Y \in \mathbb{R}^k$ given a predictor $X \in \mathbb{R}$, without using any parametric family of distributions and then estimate the desired geometric quantile of the conditional distribution. The most commonly used prior for a probability distribution is the Dirichlet process prior. The collection of conditional distributions are viewed as related quantities and hence are modeled by a Dependent Dirichlet Process (DDP) (defined in Section \ref{2S2}). One major drawback of using a Dirichlet process prior is that almost all realizations from a Dirichlet process are discrete. This issue can be handled by convolving it with a kernel. To model the error distribution flexibly without any particular parametric form, we use a countable mixture of $k$-dimensional normal distributions as our kernel. We use a Block Gibbs sampler (see \cite{ishwaran2001gibbs}) for our posterior computations, which is considerably fast. The illustrations discussed here are focused on cases with bivariate response. It should be noted that our proposed method is a new contribution even in the context of univariate quantile regression.

We use geometric quantiles for our treatment of multivariate quantile regression here and hence we define it formally below. Consider the situation when the variable $Y$ is observed along with a univariate predictor $X$ lying in a compact interval $\mathfrak{X}$. For a given value $x$ of $X \in \mathfrak{X}$, let $P_{Y\vert x}$ stand for the conditional distribution of $Y$ given $X=x$, and $F_{Y\vert x}$ denote the CDF. Then the non-parametric multivariate quantile regression function of $Y$ is given by 
\begin{equation}
    Q_{Y\vert x}(u)= \argmin_{q \in \mathbb{R}^k}P_{Y\vert x}\{\Phi_2(u,Y-q)-\Phi_2(u,Y)\},
\end{equation}
with $\Phi_p(u,t)=\Vert t \Vert_p+ \langle u,t \rangle$, for $u\in B_q^{(k)}$, with $p^{-1}+q^{-1}=1$. The true conditional distribution of $Y$ given $x$ is denoted by $P_{Y\vert x}^\star$, with the CDF being $F_{Y\vert x}^\star$. The $u$th geometric quantile of the distribution $P_{Y\vert x}^\star$ is denoted by $Q_{Y\vert x}^\star(u)$. To estimate $Q_{Y\vert x}(u)$, it is sensible to assume that it changes gradually in $x$. Hence for sensible inference, we should pull information across neighboring values of $x$ by a smoothing technique. In a Bayesian setting that we follow, we achieve the objective by putting a suitable prior on the family of distributions $\{P_{Y \vert x}: x \in \mathfrak{X}\}$. The 
borrowing of information across neighboring values of $x$ may be introduced in the prior by using a Dependent Dirichlet Process (DDP) (see Section \ref{2S2}). However, as marginal distributions of a DDP are Dirichlet processes and hence $P_{Y \vert x}$ are discrete almost surely, a smoother version of the DDP by convolving with a kernel will be used as a prior. Then the resulting posterior distribution can be computed and the induced posterior distribution on the multivariate quantile regression can be used to obtain Bayes estimates and credible sets.

The rest of this paper is organized as follows. In Section \ref{2S2}, we give a brief background on Dependent Dirichlet Processes. In Section \ref{2S3}, we describe our non-parametric Bayesian modeling approach for multivariate quantile regression. Section \ref{2S4} gives the details of the posterior computations, and in Section \ref{2S5}, we give the posterior consistency theorems. In Sections \ref{2S6} and \ref{2S7}, we demonstrate the performance of our method on simulated and real data. We close the paper with the proofs in Section \ref{2S8}.

\section{Overview on Dependent Dirichlet Process}{\label{2S2}}
We use the notation $P \sim \mathrm{DP}(\alpha)$ to state that the random measure $P$ has a Dirichlet process distribution with base measure $\alpha$. We also use $P \sim \mathrm{DP}(MG)$ where $M=\lvert \alpha \rvert$ and $\bar{\alpha}=\alpha/M$ has a distribution function $G$.

Several papers have considered extending Dirichlet process models over related random distributions, for example, \cite{cifarelli1978nonparametric}, \cite{tomlinson1999analysis} and \cite{kottas2001bayesian}  (see \cite{de2004anova} for a comprehensive review), but these models were not naturally extended to include regression on covariates. Dependent Dirichlet Processes were introduced by \cite{maceachern1999dependent}, to address regression on a predictor variable. Following the notation in \cite{ghosal2017fundamentals}, suppose that, we have a collection of distributions $P_z$ on a sample space $\Omega$, indexed by a parameter $z$ belonging to some covariate space $\mathfrak{Z}$. A useful prior distribution on $P_z$ should treat them as related quantities.

It is reasonable to equip each marginal measure $P_z$ with a Dirichlet process prior. By the stick-breaking construction of Dirichlet process, we can write $P_z$ as
\begin{equation}\label{eq:5}
    P_z=\sum_{i=1}^{\infty}W_j(z)\delta_{\theta_j(z)},
\end{equation}
where $\{W_j(z):z \in \mathfrak{Z}\}$ are called the \enquote{stick-breaking weights} and $\{\theta_j(z):z \in \mathfrak{Z}\}$ are called \enquote{locations}. The stick breaking weights are constructed as
\begin{equation}
    W_j(z)=V_j(z)\prod_{l=1}^{j-1}(1-V_l(z)),
\end{equation}
where $V_j(z) \stackrel{iid}\sim \mathrm{Be}(1,M_z)$ for $M_z>0$ with $\mathrm{Be}(a,b)$ being a beta distribution with parameters $a$ and $b$. The locations $\theta_j(z)$ are i.i.d. draws from the base measure $G_z$. This representation ensures that $P_z \sim \mathrm{DP}(M_zG_z)$ for every $z \in \mathfrak{Z}$. A process satisfying this requirement is called a Dependent Dirichlet Process (DDP). For more details, see Section 14.9, \cite{ghosal2017fundamentals}.

Several variations of DDP models have been proposed over the years. \cite{de2004anova} described dependence across the random distributions in an ANOVA-type fashion; which is particularly useful for multivariate categorical covariates. \cite{nieto2012time} proposed a version of DDP that is suitable as a prior for a time series of random probability measures. \cite{gelfand2005bayesian} proposed a DDP for point-referenced spatial data, which was later extended by \cite{duan2007generalized}. \cite{sun2017location} proposed location dependent Dirichlet process (LDDP) that incorporates non-parametric Gaussian processes in the DP modeling framework to model dependency information among data arising from space and time.
\section{Bayesian Multivariate Quantile Regression with\\ DDP}
\label{2S3}
Consider a set of independent observations $(Y^n,X^n)=(Y_1,X_1),\dots,(Y_n,X_n)$ on a univariate predictor $X$ and a $k$-variate response $Y$. Let $\mathcal{F}=\{f:\mathbb{R}^k\times \mathfrak{X}\rightarrow [0,\infty],\linebreak \int_{\mathbb{R}^k}f(y\vert x)\mathrm{d}y=1\}$ be the space of conditional densities of $Y$ given $X$, where $\mathfrak{X} \subset \mathbb{R}$ is compact. Our goal is to infer about the collection $\{Q_{Y\vert x}(u): x \in \mathfrak{X}\}$, for some fixed $u \in B_2^{(k)}$. We adopt a non-parametric modeling framework in the following way,
\begin{equation}
    Y_i\vert x_i \sim f(\cdot\vert x_i),\ \{f(\cdot\vert x), x \in \mathfrak{X}\}\sim \Pi,
\end{equation}
where $\Pi$ denotes a prior for the class of conditional densities $\{f(\cdot \vert x): x \in \mathfrak{X}\}$. We choose a DDP prior here, but as we have mentioned, distributions drawn from a Dirichlet process prior are discrete. Hence, we have to use a kernel for convolving it with. This leads us to the model
\begin{equation}
    Y_i=\xi(X_i)+\varepsilon_i,\ i=1,\dots,n,
\end{equation}
with $\xi\sim \{G_x:x \in \mathfrak{X}\}$ independently, and $\varepsilon_i$ are the random errors. Since we are interested in the $u$th geometric quantile of $\{P_{Y\vert x}: x \in \mathfrak{X}\}$, we may choose the kernel in such a way that the $u$th geometric quantile of the error distribution is zero, that is, $Q_{\varepsilon}(u)=0$.

The collection of distributions $\{G_x:x \in \mathfrak{X}\}$ follows a DDP prior. We use a stick-breaking representation for $G_x$ to introduce dependence across $x$ as
\begin{equation}
  G_{x}=\sum_{l=1}^{\infty}W_l\delta_{\xi_l(x)},
\end{equation}
for any $x \in \mathfrak{X}$. We use a common set of stick-breaking weights $W_l$ which are constructed from $V_l \stackrel{iid}\sim \mathrm{Be}(1,M_1)$, for some $M_1>0$, where the locations are drawn from a $k$-dimensional Gaussian distribution with mean vector and covariance matrix varying with $x$. We represent the locations $\xi_l(x)$ as
\begin{equation}
    \xi_l(x)=\alpha_l+\beta_l(x),
\end{equation}
where $\alpha_l \stackrel{iid}\sim \mathrm{N}_k(c_0,\Sigma_0)$ and $\beta_l$ follows a Gaussian Process (GP) with mean function $c_1x$ and covariance kernel $\Sigma(x-x^\prime)$ where 
$
\Sigma (x)=\gamma \mathrm{diag} ( e^{-\lambda \vert x\vert},\ldots,e^{-\lambda \vert x\vert})
$. Here $c_0$ and $c_1$ are constant $k$-vectors and $\Sigma_0$ is a positive definite matrix of order $k\times k$. Also, $\gamma >0$ and $\lambda >0$ are constants. The justification behind choosing such a prior structure is that under this prior specification, $\mathrm{E}(\xi_l(x_i))=c_0+c_1x_i$, that is, under the prior, the expected value of the locations $\xi_l(x)$ varies linearly with $x$, and the sample paths vary smoothly.

We model the distributions of the observations by a countable location-scale mixture of $k$-dimensional Gaussian distributions through the relations 
\begin{equation}
    Y\vert \{\xi,\ X\} \sim \sum_{j=1}^\infty p_j\mathrm{N}_k(\xi(X)+\eta_j,\sigma_j^2I_k),
\end{equation}
where $I_k$ denotes the identity matrix of order $k$. This model allows for asymmetry in the error distribution. The weights $p_j,\ j=1,2,\dots$ are again constructed using stick-breaking, using $q_j \overset{iid}{\sim} \mathrm{Be}(1,M_2)$. We put independent $k$-dimensional Gaussian and inverse gamma hyper-priors on $\eta_j$ and $\sigma_j^2$ respectively. Note that the error distribution $\sum_{j=1}^\infty p_j\mathrm{N}_k(\eta_j, \sigma_j^2I_k)$ does not satisfy $Q_{\varepsilon}(u)=0$ for any $u \in B_2^{(k)}$. So for any $u \in B_2^{(k)}$, our quantile regression model is formulated as
\begin{equation}
    Q_{Y\vert x}(u)=\xi(x)+Q_{\varepsilon}(u).
\end{equation}
 To reduce the computational burden, we use a truncation approximation on both sets of stick-breaking weights. Thus the full hierarchical Bayesian model is given by
\begin{align*}
        &Y_i=\xi(X_i)+\varepsilon_i,\ i=1,\dots, n,\\
    &\varepsilon_i \overset{iid}{\sim}\sum_{j=1}^J p_j\mathrm{N}_k(\eta_j,\ \sigma_j^2I_k),\\ 
    & p_1=q_1,\ p_l=q_l \prod _{r=1}^{l-1}(1-q_r),\ l=2,\dots,J-1,\ p_J=1-\sum_{l=1}^{J-1}q_l,\\
    & \eta_j \overset{iid}{\sim}\mathrm{N}_k(c_{\eta},\ s_\eta^2I_k),\ \sigma_j^2\overset{iid}{\sim}\mathrm{IG}(a,b)\ \ j=1,2,\dots,J,\\
    & q_l \overset{iid}{\sim} \mathrm{Be}(1,M_2),\ l=1,2,\dots,J-1,\ M_2 \sim \mathrm{Ga}(a_{M_2},b_{M_2})\\
    &\xi(X_i)  \overset{ind}{\sim}G_{X_i},\ G_{X_i}=\sum_{l=1}^{N}W_l\delta_{\xi_l(X_i)}\\
    &\xi_l(X_i)=\alpha_l+\beta_l(X_i),\ l=1,\dots,N,\\
    &\alpha_1,\dots,\alpha_N \overset{iid}{\sim}\mathrm{N}_k(c_0,\ \Sigma_0),\\
    &(\beta_l(X_1),\dots,\beta_l(X_n))\overset{iid}{\sim}\mathrm{N}_{kn}((c_1X_1,\dots,c_1X_n)),\ (\!( \Sigma (X_i-X_j) )\!) ),\ j=1,\dots,N,\\
    & W_1=V_1,\ W_l=V_l\prod_{r=1}^{l-1}(1-V_r),\ l=2,\dots, N-1,\ W_N=1-\sum_{l=1}^{N-1} W_l,\\
    & V_l \overset{iid}{\sim} \mathrm{Be}(1,M_1),\ l=1,\dots,N-1,\ M_1 \sim \mathrm{Ga}(a_{M_1}, b_{M_1}),
\end{align*}
where $\mathrm{Ga}(a,b)$ denotes a Gamma distribution with parameters $a$ and $b$, and $\mathrm{IG}(a,b)$ denotes an inverse-gamma distribution. The truncated stick-breaking construction reduces the computations to finitely many terms. We use a Markov Chain Monte Carlo (MCMC) method to compute the posterior estimates, which we describe in detail in the next section.
\section{Posterior Computation}\label{2S4}

We use a block Gibbs sampler (\cite{ishwaran2001gibbs}) for estimating the parameters. Suppose that, we have $d$ distinct observations $X_1,\dots,X_d$ on the covariate $X$, and for each $X_i$, we have $n_i$ observations $Y_{i1},\dots,Y_{in_i}$ on the response $Y$. We introduce two sets of latent variables as below for the ease of computation.
\begin{itemize}
    \item Define $L=(L_1,\dots,L_d)$ such that $L_i=l$ if and only if $\xi(X_i)=\xi_l(X_i)$, $i=1,\dots,d$. 
    \item Also define  $Z=(Z_{11},\dots, Z_{dn_d})$ such that $\varepsilon_{im}\vert \{Z_{im}=j \}\sim \mathrm{N}_k(\eta_j,\sigma_j^2I_k)$.
\end{itemize}
We describe the posterior full conditional distributions for each parameter below.
\begin{enumerate}
\item To update $\alpha_1,\dots,\alpha_N$: 
    \begin{itemize}
    \item Let $d^{\star}$ be the number of distinct values $\{L_j^{\star}:j=1,\dots, d^{\star}\}$ of the vector $L$. If $l \notin \{L_j^{\star},:j=1,\dots,d^{\star}\}$, $\alpha_l$ is drawn from $\mathrm{N}_k(c_0,\Sigma_0)$. 
    \item If $l=L_j^{\star},\ j=1,\dots, d^{\star}$,
    \begin{equation*}
    \begin{split}
            p(\alpha_{L_j^{\star}}\vert-) \propto&\  \mathrm{exp}\{-\frac{1}{2}(\alpha_{L_j^{\star}}-c_0)^{\prime} \Sigma_0^{-1}(\alpha_{L_j^{\star}}-c_0)\}\\
            &\times\prod_{\{i:L_i=L_j^{\star}\}}\prod_{r=1}^{n_i}\mathrm{exp}\{-\frac{1}{2\sigma_{Z_{ir}}^2}(Y_{ir}-\alpha_{L_j^{\star}}-\beta_{L_j^{\star}}(X_i)-\eta_{Z_{ir}})^{\prime}\\
            &\qquad\qquad\qquad(Y_{ir}-\alpha_{L_j^{\star}}-\beta_{L_j^{\star}}(X_i)-\eta_{Z_{ir}})\},
            \end{split}
        \end{equation*}
        which is a $k$-dimensional Gaussian distribution with mean $$ \bigg\{\Sigma_0^{-1}+\\ \sum_{i: L_i=L_j^\star}\sum_{r=1}^{n_i}\frac{1}{\sigma_{Z_{ir}}^2}I_k\bigg\}^{-1}\bigg\{\Sigma_0^{-1}c_0+\sum_{i: L_i=L_j^\star}\sum_{r=1}^{n_i}\frac{(Y_{ir}-\alpha_{L_j^\star}-\beta_{L_j^\star}(x_i)-\eta_{Z_{ir}})}{\sigma_{Z_{ir}}^2}\bigg\}$$ and covariance matrix $\bigg\{\Sigma_0^{-1}+\sum_{i: L_i=L_j^\star}\sum_{r=1}^{n_i}\frac{1}{\sigma_{Z_{ir}}^2}I_k\bigg\}^{-1}$.
        \end{itemize}
        \item To update $\beta_l(X_1),\dots,\beta_l(X_d),\ l=1,\dots,N$:
        \begin{itemize}
        \item If $l \notin \{L_j^{\star},:j=1,\dots,d^{\star}\}$, $$(\beta_l(X_1),\dots, \beta_l(X_d))\sim\mathrm{N}_{kd}(c_1X_1,\dots,c_1X_d), (\!( \Sigma (X_i-X_j) )\!)).$$
        \item Denote $X^j=(X_i,\ i: L_i=L_j^{\star})$ and $X^{-j}=(X_i,\ i:L_i \neq L_j^{\star})$. Also, let $\beta_{L_j^{\star}}(X^{j})=(\beta_{L_j^{\star}}(X_i),\ i:L_i=L_j^{\star})$ and $\beta_{L_j^{\star}}(X^{-j})=(\beta_{L_j^{\star}}(X_i),\ i:L_i \neq L_j^{\star})$. Let $m_{j,-j}$ denote the conditional prior mean for $\beta_{L_j^{\star}}(X^j)$ given $\beta_{L_j^{\star}}(X^{-j})$. Also, let $V_{j,-j}$ denote the conditional prior covariance matrix for $\beta_{L_j^{\star}}(X^{j})$ given $\beta_{L_j^{\star}}(X^{-j})$. Similarly, $m_{-j,j}$ and $V_{-j,j}$ denote the conditional prior mean and covariance matrix of $\beta_{L_j^{\star}}(X^{-j})$ given $\beta_{L_j^{\star}}(X^{j})$ respectively. If $l=L_j^{\star}$, $j=1,\dots,d^{\star}$, then for $i$ such that $L_i=L_j^\star$
        \begin{equation*}
           \begin{aligned} p(\beta_{L_j^{\star}}(X^{j})\vert-)\propto\ \mathrm{exp}&\{-\frac{1}{2}(\beta_{L_j^{\star}}(X^{j})-m_{j,-j})^{\prime} V_{j,-j}^{-1}(\beta_{L_j^{\star}}(X^{-j})-m_{j,-j})\}\times\\
           &\prod_{\{i:L_i=L_j^{\star}\}}\prod_{r=1}^{n_i}\mathrm{exp}\{-\frac{1}{2\sigma_{Z_{ir}}^2}(Y_{ir}-\alpha_{L_j^{\star}}-\beta_{L_j^{\star}}(X_i)-\eta_{Z_{ir}})^\prime\\
           &\qquad \qquad\qquad\qquad(Y_{ir}-\alpha_{L_j^{\star}}-\beta_{L_j^{\star}}(X_i)-\eta_{Z_{ir}})\},
           \end{aligned}
           \end{equation*}
           which is a $kU_{L_j^\star}$-dimensional Gaussian distribution, where $U_l=\#\{i: L_i=l\}$.
        \item Also $p(\beta_{L_j^{\star}}(X^{-j})\vert-)$ is proportional to
        \begin{equation*}
           \begin{aligned}  \mathrm{exp}\{-\frac{1}{2}(\beta_{L_j^{\star}}(X^{-j})-m_{-j,j})^{\prime} V_{-j,j}^{-1}(\beta_{L_j^{\star}}(X^{-j})-
           m_{-j,j})\},
           \end{aligned}
           \end{equation*}
           which is a $k(d-U_{L_j^\star})$-dimensional Gaussian distribution with mean vector $m_{-j,j}$ and covariance matrix $V_{-j,j}$.
           \item If our goal to estimate $Q_{Y\vert x}(u)$ for an arbitrary $x \in \mathfrak{X}$, for a fixed $u \in B_2^{(k)}$, $\beta_l(x)$ has to be sampled conditional on $\beta_l(X_1),\dots,\beta_l(X_d)$, for $l=1,\dots,N$, which is a $k$-dimensional Gaussian distribution.
        \end{itemize}
        \item To update $W_1,\dots,W_N$:
        
        The posterior full conditional for $W$ is given by
        \begin{equation*}
            p(w\vert -) \propto f_W(w\vert M_1)\prod_{l=1}^Nw_l^{U_l},
        \end{equation*}
        where $f_W(\cdot\vert M_1)$ is the generalized Dirichlet distribution (\cite{wong1998generalized})
        \begin{equation*}
     f_W(w\vert M_1)=M_1^{N-1}  w_N^{M_1-1} (1-w_1)^{-1} (1-(w_1+w_2))^{-1}\times \dots \times   (1-\sum_{l=2}^{N-2}w_l)^{-1}.
        \end{equation*}
        The posterior for $W$ is a generalized Dirichlet distribution as well and can be sampled as follows using latent Beta variables as follows.
        \begin{itemize}
            \item Generate $V_l \stackrel{ind}\sim \mathrm{Be}(1+U_l,M_1+\sum_{r=l+1}^NU_r)$ for $l=2,\dots,N-1$.
            \item Set $W_1=V_1$, $W_l=V_l\prod_{r=1}^{l-1}(1-V_r)$, for $l=2,\dots,N-1$, and $W_N=1-\sum_{l=1}^{N-1}W_l$.
        \end{itemize}
        \item To update $L_1,\dots,L_d$:\par
        Each $L_i$ is drawn from $\{1,\dots,N\}$ with probabilities proportional to
        \begin{equation*}
            W_l\prod_{r=1}^{n_i}
            \mathrm{exp}\{-\frac{1}{2\sigma_j^2}(Y_{ir}-\alpha_l-\beta_l(X_i)-\eta_{Z_{ir}})^{\prime}(Y_{ir}-\alpha_l-\beta_l(X_i)-\eta_{Z_{ir}})\},
        \end{equation*}
        for $l=1,\dots,N$.
        \item To update $\eta_1,\dots,\eta_J$:\par
        Suppose there are $s$ distinct values of $Z_{ij},i=1,\dots,d, j=1,\dots,n_i$, and we denote them by $Z^{\star}_1,\dots,Z_s^{\star}$. 
        
        \begin{itemize}
            \item If $j \notin \{Z^{\star}_r:r=1,\dots,s\}$, draw $\eta_j$ from $\mathrm{N}_k(c_{\eta},s_{\eta}^2I_k)$.
            \item If $j \in \{Z^{\star}_r:r=1,\dots,s\}$, draw $\eta_j$ from
                    \begin{equation*}
                    \begin{split}
        p(\eta_{Z^{\star}_r}\vert-) \propto\ & \mathrm{exp}\{-\frac{1}{2s_{\eta}^2}(\eta_{Z_r^{\star}}-c_{\eta})^{\prime} (\eta_{Z_r^{\star}}-c_{\eta})\}\\
            &\prod_{\{(i,l):Z_{il}=Z_r^{\star}\}}\mathrm{exp}\{-\frac{1}{2\sigma_{Z_r^{\star}}^2}(Y_{il}-\alpha_{L_i}-\beta_{L_i}(X_i)-\eta_{Z_r^{\star}})^{\prime}\\
        &\qquad\qquad\qquad(Y_{il}-\alpha_{L_i}- \beta_{L_i}(X_i)-\eta_{Z_r^{\star}})\},
            \end{split}
            \end{equation*}
            which is a $k$-dimensional Gaussian distribution, with mean vector $$ \bigg(\frac{1}{s_{\eta}^2}+\frac{1}{\sigma_{Z_r^\star}^2}\bigg)^{-1}\bigg(\frac{1}{s_\eta^2}c_\eta+\sum_{i,l: Z_{il}=Z_r^\star}\frac{Y_{il}-\alpha_{L_i}-\beta_{L_i}(X_i)-\eta_{Z_r^\star}}{\sigma_{Z_r^\star}^2}\bigg)$$ and covariance matrix $\bigg(\frac{1}{s_{\eta}^2}+\frac{1}{\sigma_{Z_r^\star}^2}\bigg)^{-1}I_k$.
        \end{itemize}
        \item To update $\sigma_1^2,\dots,\sigma_J^2$:
        \begin{itemize}
            \item If $j \notin \{Z^{\star}_r:r=1,\dots,s\}$, draw $\sigma_j^2$ from $\mathrm{IG}(a,b)$.
            \item If $j \in \{Z^{\star}_r:r=1,\dots,s\}$, $p(\sigma_{Z^{\star}_r}^2\vert-)$ is proportional to
            \begin{equation*}
              \begin{split}  (\sigma_{Z_r^{\star}}^2)^{-a-1}e^{-b/\sigma_{Z_r^{\star}}^2}\prod_{\{(i,l):Z_{il}=Z_r^{\star}\}}&\mathrm{exp}\{-\frac{1}{2\sigma_{Z_r^{\star}}^2}(Y_{il}-\alpha_{L_i}-\beta_{L_i}(X_i)-\eta_{Z_r^{\star}})^{\prime}\\
              &(Y_{il}-\alpha_{L_i}-\beta_{L_i}(X_i)-\eta_{Z_r^{\star}})\}.
               \end{split}
            \end{equation*}
            
        \end{itemize}
        \item To update $Z_{il},i=1,\dots,d,l=1,\dots,n_i$: 
        \begin{itemize}
        \item Draw $Z_{il}$ from $\{1,\dots, J\}$ with probability proportional to
            \begin{equation*}
                q_j\prod_{\{(i,l):Z_{il}=j\}}\mathrm{exp}\{-\frac{1}{2\sigma_j^2}(Y_{il}-\alpha_{L_i}-\beta_{L_i}(X_i)-\eta_j)^{\prime}(Y_{il}-\alpha_{L_i}-\beta_{L_i}(X_i)-\eta_j)\}.
            \end{equation*}
            \end{itemize}
            \item To update $p_1,\dots,p_J$:

             The posterior full conditional for $p=(p_1,\dots,p_J)$ is given by
        \begin{equation*}
            p(p_1,\dots,p_J\vert -) \propto f_p(p_1,\dots,p_J\vert M_2)\prod_{j=1}^J p_j^{U_j^\star},
        \end{equation*}
        with $U_j^\star=\#\{(i,l): Z_{il}=j\}$, and $f_p(p_1,\dots,p_J\vert M_2)$ is the generalized Dirichlet distribution
        \begin{equation*}
     f_p(p_1,\dots,p_J\vert M_2)=M_2^{J-1}  p_J^{M_2-1} (1-p_1)^{-1} (1-(p_1+p_2))^{-1}\times \dots \times   (1-\sum_{l=2}^{J-2}p_l)^{-1}.
        \end{equation*}
        The posterior for $(p_1,\dots,p_J)$ is a generalized Dirichlet distribution as well and can be sampled as follows using latent Beta variables as follows.
        \begin{itemize}
            \item Generate $q_l \stackrel{ind}\sim \mathrm{Be}(1+U_l^\star,M_2+\sum_{r=l+1}^J U_r^\star)$ for $l=2,\dots,J-1$.
            \item Set $p_1=q_l$, $p_l=q_l\prod_{r=1}^{l-1}(1-q_l)$, for $l=2,\dots,J-1$, and $W_J=1-\sum_{l=1}^{J-1}W_l$.
        \end{itemize}
        \item 
       To update $W_1,\dots,W_N:$
       
        The posterior for $(W_1,\dots,W_N)$ is also generalized Dirichlet distribution and can be sampled as follows using latent Beta variables as follows.
        \begin{itemize}
            \item Generate $V_l \stackrel{ind}\sim \mathrm{Be}(1+U_l,M_2+\sum_{r=l+1}^NU_r)$ for $l=2,\dots,N-1$.
            \item Set $W_1=V_1$, $W_l=V_l\prod_{r=1}^{l-1}(1-V_r)$, for $l=2,\dots,N-1$, and $W_N=1-\sum_{l=1}^{N-1}W_l$.
        \end{itemize}
        \item To update $M_1$:
            \begin{itemize}
                \item The posterior full conditional for $M_1$ is given by
                \begin{equation*}
                    p(M_1\vert -) \propto e^{-(b_{M_1}-\log W_N)M_1}M_1^{a_{M_1}+N-1},
                \end{equation*}
                which is a $\mathrm{Ga}(a_{M_1}+N,b_{M_1}-\log W_N)$ distribution.
            \end{itemize}
            \item To update $M_2$:
            \begin{itemize}
            \item The posterior full conditional for $M_2$ is given by
                \begin{equation*}
                    p(M_2\vert -) \propto e^{-(b_{M_2}-\log p_J)M_2}M_2^{a_{M_2}+J-1},
                \end{equation*}
                which is a $\mathrm{Ga}(a_{M_2}+J,b_{M_2}-\log p_J)$ distribution.
                \end{itemize}
\end{enumerate}
The MCMC sampling scheme is implemented using the \textit{nimble} package in R, which is a system for writing hierarchical Bayesian models highly compatible with \textit{BUGS} and \textit{JAGS}. The advantage of using \textit{nimble} is that it compiles the models by generating C++ code, which makes the computation faster.

After generating the posterior samples, we need to take care of the violation of the condition $Q_{\varepsilon}(u)=0$. For each of the $B$ many MCMC iterations of $p=(p_1,\dots,p_J),\ \eta=(\eta_1,\dots,\eta_J)$ and $\sigma^2=(\sigma_1^2,\dots,\sigma_J^2)$, we have to compute
\begin{equation}\label{eq:6}
    Q^b_{\varepsilon}(u)=\argmin_{\theta \in \mathbb{R}^k}\sum_{j=1}^Jp_j^b\int \Phi_2(u,x-\theta)e^{-\Vert x-\eta_j^b \Vert ^2/2(\sigma_j^2)^b}\mathrm{d}x,\ b=1,\dots, B.
\end{equation}
When the response variable is two-dimensional, the integral inside \eqref{eq:6} can be reduced to a one-dimensional integral by the following trick. For $k=2$, we can do a polar transform and reduce the integral for each $j$ in \eqref{eq:6} to 
\begin{equation}\label{eq:7}
\begin{aligned}
  & \bigg(\frac{1}{2\pi}\bigg)^{k/2}e^{-\Vert \theta-\eta_j \Vert^2/2 \sigma_j^2} \sigma_j^2  \int_0^{\infty} r^2e^{-r^2/2}\bigg(\int_0^{2\pi}e^{-r(\theta_1-\eta_{j1})\cos u-r(\theta_2-\eta_{j2})\sin u}\mathrm{d}u\bigg)\mathrm{d}r\\
   & +u_1(\eta_{j1}-\theta_1)+u_2(\eta_{j2}-\theta_2).
   \end{aligned}
\end{equation}
Then, \eqref{eq:7} reduces to
\begin{equation}\label{eq:8}
\begin{aligned}
    &\bigg(\frac{1}{2\pi}\bigg)^{k/2}e^{-\Vert \theta-\eta \Vert^2/2\sigma_j^2}\sigma_j^2  \int_0^{\infty} r^2e^{-r^2/2}2\pi I_0\bigg(r\sqrt{(\theta_1-\eta_1)^2+(\theta_2-\eta_2)^2}\bigg)\mathrm{d}r\\
    & u_1(\eta_1-\theta_1)+u_2(\eta_2-\theta_2),
    \end{aligned}
\end{equation}
where $I_0$ is the modified Bessel function of first kind (\cite{ifantis1990inequalities}). We can take a rectangular grid for $\theta$ and for each $\theta$ in that grid, we compute the integral in \eqref{eq:8} and the minimizer gives us an approximation for $Q_{\varepsilon}^b(u)$. We then calculate $B^{-1}\sum_{b=1}^BQ_{\varepsilon}^b(u)$ and add this to the posterior mean of $\xi(X_i)$ for each $i=1,\dots,d$, which gives us the $u$th geometric quantile for $Y$ given $X_i$, $i=1,\dots,d$.

There is another way of numerically computing $Q_{\epsilon}^b(u)$, by Monte Carlo integration. This method extends to dimensions higher than 2. For each $b \in \{1,\dots,B\}$, we generate a large number of samples $t_1^b,\dots, t_R^b$ from the mixture of $k$-dimensional Gaussian distributions $\sum_{j=1}^Jp_j\mathrm{N}_k(\eta_j,\sigma_j^2I_k)$. We again take a grid for $\theta$ and for each $\theta$ in the grid and for each $b \in \{1,\dots,B\}$. We compute the Monte Carlo average $R^{-1}\sum_{r=1}^R\Phi_2(u,t_r^b-\theta)$. We look at the minimizer of this Monte Carlo  average which approximates $Q_{\varepsilon}^b(u)$.
\section{Posterior Consistency}
\label{2S5}
In this section, we prove the weak posterior consistency for any fixed $u$th geometric quantile of the collection of true conditional densities $\{f^\star(\cdot \vert x): x \in \mathfrak{X}\}$. The model described in Section \ref{2S3} can be alternatively written as
\begin{equation}
    f(y\vert x)= \int \frac{1}{\sigma^k}\phi_k \bigg(\frac{y-\xi(x)-\eta}{\sigma}\bigg)\mathrm{d}G(\xi)\mathrm{d}Q(\eta,\sigma),
\end{equation}
where $\phi_k(\cdot)$ denotes a $k$-dimesnional standard normal kernel, and $G \times Q$ is the mixing distribution. The measures $G$ and $Q$ are of the form
\begin{align*}
    G=& \sum_{h=1}^\infty W_h \delta_{\xi_h},\\
    Q=& \sum_{j=1}^\infty p_j \delta_{(\eta_j,\sigma_j)}.
\end{align*}
For every $x \in \mathfrak{X}=[0,1]$, $G_x$ is the induced measure of $G$ through the evaluation map $\xi \mapsto \xi(x)$, which can be written as
$$
G_x=\sum_{h=1}^\infty W_h\delta_{\xi_h(x)}.
$$
The mixing distribution $G\times Q$ is given the prior $\mathcal{P}$ on $\mathfrak{M}(\mathcal{C}(\mathfrak{X})\times \mathbb{R}^k \times \mathbb{R}^+)$, where $\mathfrak{M}(\Theta)$ is the space of probability distributions on $\Theta$, and $\mathcal{C}(\mathfrak{X})$ is the space of continuous functions on $\mathfrak{X}$. The prior $\mathcal{P}$ on $\mathfrak{M}(\mathcal{C}(\mathfrak{X})\times \mathbb{R}^k \times \mathbb{R}^+)$ induces a prior $\Pi$ on the space of conditional densities $\mathcal{F}$ through the map $ \displaystyle G \times Q \mapsto \int \phi_k \bigg(\frac{y-\xi(x)-\eta}{\sigma}\bigg)\mathrm{d}G(\xi)\mathrm{d}Q(\eta,\sigma)$. We put priors on $G$ and $Q$ through stick-breaking weights, as described in Section \ref{2S3}. The true density $f^\star \in \mathcal{F}$ is assumed to be of the form
\begin{equation}{\label{eq72}}
    f^\star(y\vert x)=\int \frac{1}{\sigma^k}\phi_k \bigg(\frac{y-\xi(x)-\eta}{\sigma}\bigg)\mathrm{d}G^\star(\xi)\mathrm{d}Q^\star(\eta,\sigma),
\end{equation}
where $G^\star$ and $Q^\star$ are compactly supported probability measures on $\mathcal{C}(\mathfrak{X})$ and $\mathbb{R}^k\times \mathbb{R}^+$ respectively. Just like before, $G_x^\star$ is the induced measure from $G^\star$ through the evaluation map $\xi \mapsto \xi(x)$, with $G^\star$ being compactly supported on $\mathbb{R}^k$.
To prove the posterior consistency of the conditional geometric quantiles $Q_{Y\vert x}(u)$, we need to show the posterior consistency of the conditional distribution $F_{Y\vert x}$ around a neighborhood of the true distribution $F^\star_{Y\vert x}$. It appears not possible to derive the posterior consistency of the conditional distribution at a value of $x$ from the weak posterior consistency of the joint distribution of $X$ and $Y$. However, we can derive the posterior consistency of a $\delta$-smoothed conditional distribution of $Y$ given $x$, defined below. Let $P_{X,Y}$ denote the joint distribution of $X$ and $Y$, and let $F_{X,Y}$ denote the CDF. for a chosen $\delta >0$, the $\delta$-smoothed posterior distribution function of $Y$ given $X$ is defined as
\begin{equation}
    F_{\delta; Y\vert x}(y)= \frac{P_{X,Y}(\vert X-x \vert \leq \delta, Y \leq y) }{P_X(\vert X-x \vert \leq \delta)}.
\end{equation}
For $u \in B_2^{(k)}$, the $u$th geometric quantile of the $\delta$-smoothed conditional distribution is defined as
\begin{equation}
Q_{\delta;Y\vert x}(u)=   \argmin_{\theta \in \mathbb{R}^k}\int \{\Phi_2(u,Y-\theta)-\Phi_2(u,Y)\}\mathrm{d}F_{\delta; Y \vert x}(y).
\end{equation}
Our main theorem, which gives the posterior consistency for $\{Q_{\delta;Y\vert x}(u): x\in \mathfrak{X}\}$ for a fixed $u \in B_2^{(k)}$ is stated below. 
\begin{thm}\label{th1}
Assume that, for every $x \in \mathfrak{X}$ and $u \in B_2^{(k)}$, 
$$
\inf_{\theta: \Vert \theta-\theta^\star\Vert_2 \geq \epsilon} \int \{\Phi_2(u,y-\theta)-\Phi_2(u,y)\}\mathrm{d}F^\star_{Y\vert x}(y) > \int \{\Phi_2(u,y-\theta^\star)-\Phi_2(u,y)\}\mathrm{d}F^\star_{Y\vert x}(y).
$$
Then for $\delta_n\to 0$ sufficiently slowly and for every $x \in \mathfrak{X}$, 
$\Pi\{\vert Q_{\delta_n;Y\vert x}(u)- Q^\star_{Y\vert x}(u)\vert <\epsilon \vert (Y^n,X^n)\} \rightarrow 1$ a.s., for every $\epsilon>0$.
\end{thm}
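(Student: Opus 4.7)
The plan is to prove Theorem \ref{th1} in two stages: first establish weak posterior consistency for the joint distribution $P_{X,Y}$ via Schwartz's theorem, and then transfer this to consistency of the quantile by combining (i) continuity of the smoothed conditional CDF as a functional of the joint and (ii) continuity of the argmin under the well-separated minimizer hypothesis, while carefully letting $\delta_n \to 0$ only slowly enough that the denominator in the ratio definition of $F_{\delta_n;Y\vert x}$ stays under control.

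For the first stage, I verify that $f^\star$ lies in the Kullback-Leibler support of the induced prior $\Pi$ on the joint density of $(X,Y)$. Because $f^\star$ itself arises as a mixture of normals over a compactly supported mixing measure $G^\star \times Q^\star$, it suffices to show that the prior on mixing measures places positive mass on every weak neighborhood of $G^\star \times Q^\star$, and that weak-neighborhood closeness on mixing measures transfers to KL closeness on densities (the latter from uniform continuity of the Gaussian kernel together with compact support of $G^\star$ and $Q^\star$). The stick-breaking construction with iid Gaussian locations $\alpha_l$ and Gaussian-process perturbations $\beta_l$ yields full weak support on $\mathcal{C}(\mathfrak{X})$ for $G$, and the Gaussian/inverse-gamma stick-breaking gives full weak support on $\mathbb{R}^k \times \mathbb{R}^+$ for $Q$. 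Combined with the empirical distribution of the covariates, Schwartz's theorem then yields weak posterior consistency of $P_{X,Y}$ at $P^\star_{X,Y}$ almost surely.

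For the second stage, fix $\delta > 0$ with $P^\star_X\{\vert X-x\vert = \delta\} = 0$; then the map $P_{X,Y} \mapsto F_{\delta;Y\vert x}$ is continuous in the weak topology, so the posterior of $F_{\delta;Y\vert x}$ concentrates weakly at $F^\star_{\delta;Y\vert x}$. The integrand $y \mapsto \Phi_2(u,y-\theta) - \Phi_2(u,y) = \Vert y-\theta\Vert_2 - \Vert y\Vert_2 - \langle u,\theta\rangle$ is bounded and continuous in $y$, and its $F^\star_{Y\vert x}$-expectation is minimized by a well-separated $\theta^\star$ by hypothesis; the classical argmin-continuous-mapping argument (van der Vaart, Theorem 5.7) then gives $Q_{\delta;Y\vert x}(u) \to Q^\star_{\delta;Y\vert x}(u)$ under weak convergence of the underlying CDF. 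Letting $\delta = \delta_n \to 0$ slowly, one has $F^\star_{\delta_n;Y\vert x} \to F^\star_{Y\vert x}$ weakly by continuity of the conditional in $x$, so $Q^\star_{\delta_n;Y\vert x}(u) \to Q^\star_{Y\vert x}(u)$, and a diagonal extraction chains the fixed-$\delta$ posterior consistency statements into the conclusion along $\delta_n$.

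The central technical obstacle is precisely the ratio structure of $F_{\delta_n;Y\vert x}$: the denominator $P_X\{\vert X-x\vert \le \delta_n\}$ shrinks to zero, so weak consistency of $P_{X,Y}$ does not automatically control the quotient unless $\delta_n$ decays slower than the effective rate of weak convergence. I expect to quantify ``sufficiently slowly'' by coupling a bound on the posterior weak distance (e.g.\ in the bounded-Lipschitz metric) with a lower bound on $P^\star_X\{\vert X-x\vert \le \delta_n\}$, with the admissible rate dictated by the prior on the covariate and the smoothness of $f^\star(\cdot\vert x)$ in $x$. A secondary subtlety is restricting the argmin to a compact set in $\theta$, but this is immediate since $\Vert\theta\Vert \to \infty$ forces the check-loss objective to $+\infty$, so the minimization is effectively compactly supported and classical M-estimator continuity applies.
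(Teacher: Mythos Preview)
Your proposal is correct and follows essentially the same architecture as the paper: KL support of the prior via full weak support of the stick-breaking mixing-measure prior, Schwartz's theorem for weak consistency of the joint, and then the argmax theorem (van der Vaart, Theorem~5.7) together with the well-separatedness hypothesis and a compactness reduction in~$\theta$ to pass to the quantile. The only substantive difference is in how the smoothed conditional is handled. Rather than fixing $\delta$, proving $Q_{\delta;Y\vert x}(u)\to Q^\star_{\delta;Y\vert x}(u)$, and then diagonalizing, the paper first upgrades weak consistency to uniform CDF consistency $\sup_{x,y}\lvert F_{X,Y}-F^\star_{X,Y}\rvert<\epsilon_n$ for some $\epsilon_n\downarrow 0$, exploits that the covariate marginal $q$ is \emph{fixed} and bounded below by $a>0$ (so $P_X(\lvert X-x\rvert\le\delta)\ge 2a\delta$ is deterministic, not random as your phrasing ``prior on the covariate'' suggests), takes $\delta_n\to 0$ slower than $\epsilon_n$, and thereby obtains $F_{\delta_n;Y\vert x}\Rightarrow F^\star_{Y\vert x}$ directly in posterior; the argmax theorem is then applied \emph{once}, with the well-separatedness hypothesis for $F^\star_{Y\vert x}$ exactly as stated. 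Your two-step diagonal route would instead require well-separatedness of the minimizer under each intermediate $F^\star_{\delta;Y\vert x}$, which is true (by strict convexity of the geometric-quantile criterion) but is not the assumption you are given, so the paper's single-limit argument is cleaner. The paper also spells out two points you leave implicit: the uniform-in-$\theta$ convergence needed for van der Vaart~5.7 is verified by observing that the class $\{g(u;\cdot,\theta):\theta\in\Theta\}$ is uniformly bounded and uniformly Lipschitz, and the compactness of~$\Theta$ is established by an explicit lemma bounding $\lVert Q_{Y\vert x}(u)\rVert$ in terms of a tail-probability quantile of $F^\star_{Y\vert x}$.
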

We first need to introduce some notions and establish some auxiliary results. The first result guarantees that our chosen prior is sensible in the non-parametric setting, i.e., it has a large topological support. We consider support properties for weak neighborhoods of the type 
\begin{equation}\label{eq455}
\begin{split}
    \bigg\{f: \sup_x\Big\vert \int\{g(y)f(y\vert x)\mathrm{d}y-g(y)f^\star(y\vert x)\}\mathrm{d}y\Big\vert < \epsilon\bigg\},
\end{split}
\end{equation}
for every bounded and continuous function $g: \mathbb{R}^k\to \mathbb{R}$. 
\begin{lemma}{\label{l451}}
For every bounded and continuous $g:\mathbb{R}^k \rightarrow [0,1]$, 
\begin{equation}
\begin{split}
    \Pi\bigg\{f: \sup_x\Big\vert \int\{g(y)f(y\vert x)\mathrm{d}y-g(y)f^\star(y\vert x)\}\mathrm{d}y\Big\vert < \epsilon\bigg\}>0.
\end{split}
\end{equation}
\end{lemma}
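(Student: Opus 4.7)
Set $\tilde g_\sigma(z)=\int g(y)\sigma^{-k}\phi_k((y-z)/\sigma)\,dy$, a function of $(z,\sigma)$ which is bounded by $\|g\|_\infty$ and, since $g$ is bounded and continuous, jointly uniformly continuous on bounded subsets. Writing $T_x(G,Q)=\int\tilde g_\sigma(\xi(x)+\eta)\,dG(\xi)\,dQ(\eta,\sigma)$, Fubini gives $T_x(G,Q)=\int g(y)f(y\mid x)\,dy$, with the analogous identity for $f^\star$ using $G^\star$ and $Q^\star$. The event in the lemma becomes $\{(G,Q):\sup_{x\in\mathfrak X}|T_x(G,Q)-T_x(G^\star,Q^\star)|<\epsilon\}$. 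The plan is: (a) approximate $(G^\star,Q^\star)$ by a finitely supported discrete measure that matches $T_x$ uniformly in $x$ to within $\epsilon/3$; (b) show that the stick-breaking DDP and DP priors place positive mass on a sup-norm neighborhood of that discrete approximation, and use equicontinuity of the test-function family in $x$ to propagate sup-norm closeness of atoms to uniform-in-$x$ closeness of $T_x$.

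For step (a), use compactness of the supports $K_G\subset\mathcal C(\mathfrak X)$ of $G^\star$ and $K_Q\subset\mathbb R^k\times\mathbb R^+$ of $Q^\star$ to cover them by finitely many sup-norm balls centered at $\xi_1^\star,\dots,\xi_H^\star$ and $(\eta_1^\star,\sigma_1^\star),\dots,(\eta_J^\star,\sigma_J^\star)$, with cell masses $W_h^\star, p_j^\star$. Since evaluation $\xi\mapsto\xi(x)$ is $1$-Lipschitz in sup norm uniformly in $x\in\mathfrak X$, and $\tilde g_\sigma$ is uniformly continuous in $(z,\sigma)$ on the compact image set induced by $K_G\times K_Q\times\mathfrak X$, refining the cover yields $\sup_{x\in\mathfrak X}|T_x(G^\star,Q^\star)-\sum_{h,j}W_h^\star p_j^\star\tilde g_{\sigma_j^\star}(\xi_h^\star(x)+\eta_j^\star)|<\epsilon/3$.

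For step (b), I would show that, for some $\epsilon'>0$ depending on $\epsilon$, the prior assigns positive probability to the joint event that: (i) the first $H$ stick-breaking weights $W_h$ lie within $\epsilon'$ of $W_h^\star$ and $\sum_{h>H}W_h<\epsilon'$, with the analogous condition for $p_j$; (ii) $\sup_{x\in\mathfrak X}|\xi_h(x)-\xi_h^\star(x)|<\epsilon'$ for $h\le H$; (iii) $\|(\eta_j,\sigma_j)-(\eta_j^\star,\sigma_j^\star)\|<\epsilon'$ for $j\le J$. Part (i) holds because the beta variables driving the stick-breaking admit densities strictly positive on $(0,1)$; part (iii) holds because $\mathrm N_k(c_\eta,s_\eta^2I_k)\otimes\mathrm{IG}(a,b)$ has full support on $\mathbb R^k\times\mathbb R^+$. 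Part (ii) is the crucial one: each $\xi_h=\alpha_h+\beta_h(\cdot)$ is a Gaussian random element of $\mathcal C(\mathfrak X)$, with $\alpha_h\sim\mathrm N_k(c_0,\Sigma_0)$ supplying an unrestricted constant shift and $\beta_h$ a mean-square-continuous GP with strictly positive definite exponential covariance; standard Gaussian small-ball arguments show its law has full topological support in $\mathcal C(\mathfrak X)$ around any continuous target $\xi_h^\star$. On this event, together with the tail estimate $\|g\|_\infty\sum_{h>H}W_h<\epsilon'$ and the equicontinuity of $\{(\xi,\eta,\sigma)\mapsto\tilde g_\sigma(\xi(x)+\eta):x\in\mathfrak X\}$ on the compact set containing the relevant atoms, we obtain $\sup_x|T_x(G,Q)-\sum_{h,j}W_h^\star p_j^\star\tilde g_{\sigma_j^\star}(\xi_h^\star(x)+\eta_j^\star)|<2\epsilon/3$; combining with step (a) closes the proof.

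\textbf{Main obstacle.} The difficulty is the uniformity in $x$. A pointwise-in-$x$ version of the lemma would reduce to the classical Schwartz-type weak-support theory for Dirichlet process mixtures applied at each $x$. Here, however, we must control each random atom $\xi_h$ in the sup norm on $\mathcal C(\mathfrak X)$, not merely at a single point. It is precisely the small-ball behavior of the GP prior in sup norm, combined with the $1$-Lipschitz evaluation map and the uniform continuity of the convolved kernel $\tilde g_\sigma$, that converts sup-norm proximity of the atoms into uniform-in-$x$ proximity of the functionals $T_x$; verifying each of these equicontinuity/small-ball estimates is where the bulk of the technical work lies.
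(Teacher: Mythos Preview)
Your plan is correct and will work, but it differs from the paper's argument in how prior positivity is ultimately secured. The paper does not build a discrete approximation to $(G^\star,Q^\star)$ at the atom level. Instead it shows that the event $\{\sup_x|T_x(G,Q)-T_x(G^\star,Q^\star)|<\epsilon\}$ \emph{contains} an ordinary weak neighborhood of $(G^\star,Q^\star)$ in $\mathfrak M(\mathcal C(\mathfrak X))\times\mathfrak M(\mathbb R^k\times\mathbb R^+)$: the same equicontinuity you invoke is combined with Arzel\`a--Ascoli to reduce the supremum over $x$ to finitely many test functions $\gamma(\cdot)(x_1),\dots,\gamma(\cdot)(x_s)$, and a restriction-to-compacta trick handles the tails. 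Positive prior mass then follows in one stroke from the standard fact that a Dirichlet process whose base measure has full topological support (here the GP law on $\mathcal C(\mathfrak X)$ and $\mathrm N_k\otimes\mathrm{IG}$ on $\mathbb R^k\times\mathbb R^+$) has full weak support. Your route is more constructive: you work directly with the stick-breaking atoms and assemble positivity from the small-ball probability of the GP in sup norm, the Beta densities for the weights, and the full support of $\mathrm N_k\otimes\mathrm{IG}$. Both arguments hinge on the same equicontinuity observation; the paper's version is shorter because it outsources the final step to a known DP support lemma, while yours is self-contained and makes the role of the GP small-ball estimate explicit.
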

The proof of Lemma \label{l451} in Section \ref{2S8}. We will need one more auxiliary result for proving the weak consistency, and for that we will need a bit more notation here. We define the joint density of $X$ and $Y$ by $h(x,y)=f(y\vert x)q(x)$, where $q(x)$ is the fixed marginal density of $X$, assumed to be bounded away from 0 on $\mathfrak{X}$. Similarly, the true joint density of $X$ and $Y$ is given by $h^\star(x,y)=f^\star(y\vert x)q(x)$. 
Next, we define the notion of a weak neighborhood for the true conditional density $f^\star \in \mathcal{F}$. 
\begin{definition}
A sub-base of a weak-neighborhood for a collection of conditional densities $\{f^\star(\cdot \vert x): x \in \mathfrak{X}\}$ is defined as
\begin{equation}{\label{eq57}}
    W_{\epsilon,g}(f^\star)=\bigg\{f: \big\vert \int_{\mathfrak{X}\times \mathbb{R}^k} gh-\int_{\mathfrak{X}\times \mathbb{R}^k} gh^\star \big\vert < \epsilon \bigg\},
\end{equation}
for a bounded and continuous function $g:\mathfrak{X}\times \mathbb{R}^k$. A weak neighborhood base is formed by finite intersection of neighborhoods of the type \eqref{eq57}.
\end{definition}
\begin{definition}
The posterior $\Pi\{\cdot \vert (Y^n,X^n)\}$ is weakly consistent at $f^\star \in \mathcal{F}$ if for any bounded and continuous function $g$, $\Pi\{W_{\epsilon,g}(f^\star)\vert (Y^n,X^n)\}\rightarrow 1$ a.s.
\end{definition}
However, the large topological support of the prior is not sufficient for proving the weak posterior consistency, The weak posterior consistency at $h^\star$ holds if the prior $\Pi$ puts positive probability on Kullback-Leibler neighborhoods of $f^\star$, which is defined below.
\begin{definition}
For any $\epsilon > 0$, an $\epsilon$-sized Kullback-Leibler ($\mathrm{KL}$) neighborhood around $f^\star$ is defined as
$$
K_{\epsilon}(f^\star)=\{f: \mathrm{KL}(h^\star,h)< \epsilon,\ h(x,y)=f(y\vert x)q(x), x \in \mathfrak{X}, y \in \mathbb{R}^k\},
$$
where $\mathrm{KL}(h^\star,h)=\int h^\star \log (h^\star/h)$. Then if $\Pi\{K_{\epsilon}(f^\star)\}>0$ for every $\epsilon > 0$, we say $f^\star \in \mathrm{KL}(\Pi)$. 
\end{definition}
\begin{lemma}{\label{l1}}
For $f^\star \in \mathcal{F}$ of the form in \eqref{eq72}, $f^\star \in \mathrm{KL}(\Pi)$. 
\end{lemma}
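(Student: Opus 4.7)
My plan is to adapt the classical Schwartz-type KL-support argument for Dirichlet process mixture models to the DDP mixture setting here. I would approximate the true mixing distribution $G^\star \times Q^\star$ by a finitely supported pair $(G_0, Q_0)$ whose induced conditional density is $\mathrm{KL}$-close to $f^\star$, and then exploit the large topological support of the DDP-plus-Gaussian-process prior to place positive prior mass on weak neighborhoods of $(G_0, Q_0)$.

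First I would use the compactness of the supports of $G^\star$ on $\mathcal{C}(\mathfrak{X})$ and of $Q^\star$ on $\mathbb{R}^k\times\mathbb{R}^+$: the atoms of those supports lie in a uniformly bounded set, and the $\sigma$ coordinate is bounded away from both $0$ and $\infty$. By a quantization argument I would construct $G_0=\sum_{h=1}^H w_h^\star\delta_{\xi_h^\star}$ with continuous $\xi_h^\star$, and $Q_0=\sum_{j=1}^J p_j^\star\delta_{(\eta_j^\star,\sigma_j^\star)}$ with $(\eta_j^\star,\sigma_j^\star)$ in a compact subset of $\mathbb{R}^k\times(0,\infty)$, so that the induced density $f_0$ satisfies $\mathrm{KL}(h^\star,h_0)<\epsilon/2$. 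This step relies on uniform continuity in the parameters of the Gaussian kernel $\sigma^{-k}\phi_k((y-\xi(x)-\eta)/\sigma)$ and standard Gaussian tail control.

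Next I would analyze the induced prior on $(G,Q)$. For $Q$, the stick-breaking prior has atoms drawn iid from $\mathrm{N}_k(c_\eta,s_\eta^2 I_k)\otimes\mathrm{IG}(a,b)$, whose base has full topological support on $\mathbb{R}^k\times(0,\infty)$; a standard Sethuraman-type argument (truncating to the first $J$ atoms and controlling the tail weight) gives positive prior mass on any weak neighborhood of $Q_0$. For $G$, each atom is $\xi_l=\alpha_l+\beta_l$ with $\alpha_l\sim\mathrm{N}_k(c_0,\Sigma_0)$ and $\beta_l$ a Gaussian process with mean $c_1 x$ and the OU-type covariance on the compact interval $\mathfrak{X}$. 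Since the RKHS of the exponential kernel is dense in $\mathcal{C}(\mathfrak{X})$ under the sup norm and $\alpha_l$ has full support on $\mathbb{R}^k$, the translated process has full topological support in $\mathcal{C}(\mathfrak{X};\mathbb{R}^k)$, so $\mathrm{P}(\|\xi_l-\xi_h^\star\|_\infty<\delta)>0$ for each continuous $\xi_h^\star$ and every $\delta>0$. Combining these ingredients yields a set of positive prior probability on which the first $H$ atoms of $G$ are sup-norm close to $\xi_1^\star,\dots,\xi_H^\star$, the first $J$ atoms of $Q$ are close to those of $Q_0$, and the stick-breaking weights approximate $w_h^\star$ and $p_j^\star$.

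The hard part will be translating this parameter-level proximity into $\mathrm{KL}$ proximity of the induced densities, uniformly in $x\in\mathfrak{X}$ and integrably in $y$. I would establish a pointwise lower bound of the form $f_{G,Q}(y\vert x)\gtrsim c_1 e^{-c_2\|y\|^2}$, valid on the above set of positive prior probability (this uses the lower bound $\sigma\geq\underline\sigma>0$ on the atoms of $Q$ and the uniform boundedness of the $\xi$- and $\eta$-atoms), so that $\log(f^\star/f_{G,Q})$ grows at most quadratically in $\|y\|$ and is $f^\star$-integrable thanks to the compact supports of $G^\star$ and $Q^\star$, which force $f^\star$ to have sub-Gaussian tails uniformly in $x$. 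A dominated-convergence argument then yields $\mathrm{KL}(h_0,h_{G,Q})<\epsilon/2$, and combining with the first step gives $\mathrm{KL}(h^\star,h_{G,Q})<\epsilon$, hence $\mathcal{P}\{(G,Q):f_{G,Q}\in K_\epsilon(f^\star)\}>0$, which is the claim. This tail-control step is the DDP analogue of the standard argument for DP location-scale mixtures of Gaussians and is the technical heart of the proof.
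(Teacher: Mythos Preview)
Your overall strategy is in the right family---show that the prior places positive mass on mixing measures $(G,Q)$ for which $f_{G,Q}$ is $\mathrm{KL}$-close to $f^\star$, using compactness of the supports of $G^\star,Q^\star$, smoothness of the Gaussian kernel in its parameters, and full support of the stick-breaking/GP base measures. But there is a genuine gap in your final combination step: you propose to ``combine'' $\mathrm{KL}(h^\star,h_0)<\epsilon/2$ with $\mathrm{KL}(h_0,h_{G,Q})<\epsilon/2$ to conclude $\mathrm{KL}(h^\star,h_{G,Q})<\epsilon$. Kullback--Leibler divergence does not satisfy a triangle inequality, so this inference is invalid. The correct identity is
\[
\mathrm{KL}(h^\star,h_{G,Q})=\mathrm{KL}(h^\star,h_0)+\int h^\star\log\frac{h_0}{h_{G,Q}},
\]
and the second summand is not $\mathrm{KL}(h_0,h_{G,Q})$. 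Moreover, your own tail argument is already phrased for $\log(f^\star/f_{G,Q})$, so the detour through a finitely supported $(G_0,Q_0)$ buys you nothing: you should bound $\mathrm{KL}(h^\star,h_{G,Q})$ directly and drop the intermediate quantization.

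That is precisely the route the paper takes. Since $G^\star$ and $Q^\star$ are compactly supported by hypothesis, no finite approximation is needed: the paper works with a weak neighborhood of $G^\star\times Q^\star$ itself. The $\mathrm{KL}$ integral is split at $\mathcal{K}=\{\|y\|_2\le K\}$. On $\mathcal{K}^c$ the log-ratio is bounded by an explicit quotient of extremal Gaussians over the compact parameter box $D=D_1\times D_2$, which is a polynomial in $\|y\|_2$ and is $f^\star$-integrable because $f^\star$ has sub-Gaussian tails uniformly in $x$; choosing $K$ large makes this piece $<\epsilon/2$. On $\mathcal{K}$, an Arzel\`a--Ascoli argument shows the family $\{(\xi,\eta,\sigma)\mapsto\sigma^{-k}\phi_k((y-\xi(x)-\eta)/\sigma):\,y\in\mathcal{K},\,x\in\mathfrak{X}\}$ is totally bounded on $D$; weak proximity of $(G,Q)$ to $(G^\star,Q^\star)$ then forces $f_{G,Q}/f^\star$ uniformly close to $1$ on $\mathcal{K}\times\mathfrak{X}$, making the remaining piece $<\epsilon/2$. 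Positive prior mass on such weak neighborhoods is exactly Lemma~1. Your explicit atom-matching and GP-support argument is a legitimate alternative proof of that large-support step, but the $\mathrm{KL}$-combination you write is not valid as stated and must be replaced by the direct bound above.
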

The proof of Lemma \ref{l1} is presented in Section \ref{2S8}, which ensures the weak consistency of the posterior at $h^\star$.
\section{Simulation Study}\label{2S6}
In this section, we demonstrate Bayesian median regression with a bivariate response $Y$ and a univariate predictor $X$ on simulated data, which is a special case of the method illustrated above. We compare our method with a couple of frequentist methods. 

It would be interesting to investigate the cases where the error distribution is something other than Gaussian, so here we choose a bivariate $t$-distribution with degree of freedom 1, non-centrality parameter $(0,0)$ and scale matrix $I_2$ (a symmetric heavy-tailed distribution) and a bivariate gamma distribution with shape and rate parameter 1 and correlation matrix
$$
V=
\begin{pmatrix}
1 & 0.7\\
0.7 & 1
\end{pmatrix},
$$
which is a skewed distribution. We draw $100$ samples $\varepsilon=(\varepsilon_1,\dots,\varepsilon_{100})$ from the above mentioned error distributions, and the predictors $X=(X_1,\dots,X_{100})$ are drawn from a $\mathrm{N}(0,1)$ distribution. We form the response vector $Y=(Y_1,Y_2)$ as follows.
\begin{equation}
    \begin{pmatrix}
    Y_1\\
    Y_2
    \end{pmatrix}
    = \begin{pmatrix}
    1 & 2 \\
    0 & 1
    \end{pmatrix}
    \begin{pmatrix}
    1\\
    X^2
    \end{pmatrix} + \epsilon.
\end{equation}
We are interested in the $(0,0)$-th geometric quantile, that is, the spatial median of $y$ given $x$. Next, we describe our chosen prior specifications. We have two sets of stick-breaking weights, $(p_1,p_2,\dots,)$ and $(W_1,W_2,\dots,)$. Both sets of stick-breaking weights are truncated at 20, i.e., we have chosen both $N$ and $J$ to be 20. Both set of stick-breaking weights are generated from the variables $V_l \overset{iid}{\sim} \mathrm{Be}(1,M_1)$ and $q_j \overset{iid}{\sim} \mathrm{Be}(1,M_2)$, $l=1,\dots,20,\ j=1,\dots,20$, where $M_1$ and $M_2$ are drawn from $\mathrm{Gamma}(1,1)$ prior. Next, $(\eta_1,\dots,\eta_{20})$ is drawn from a $N_2((0,0)^T,10I_2)$ prior, and $(\sigma_1^2,\dots,\sigma_{20}^2)$ is drawn from a $\mathrm{IG}(1,1)$ prior. We choose $c_0$ to be $(1,1)^T$ and $c_1$ to be $(2,1/2)$. Also, $S_0$ is chosen to be $10I_2$. For the matrix $(\!( \Sigma (X_i-X_j) )\!)$, $\gamma$ is chosen to be $10$ and $\lambda$ is chosen to be $1/2$.

Since we could only prove a weak consistency theorem for quantiles of $\delta$-smoothed posterior distributions, we here demonstrate quantile regression for $\delta$-smoothed posterior distribution of $Y$ given $X$, for some chosen $\delta$. The weak posterior consistency ensures that, for some $\epsilon>0$
\begin{equation}
   \Pi\{ \sup_{x,y}\vert F_{X,Y}(x,y)-F_{X,Y}^\star(x,y)\vert<\epsilon \vert (Y^n,X^n)\}\rightarrow 1,
\end{equation}
It will be shown in Section \ref{2S8} that $\delta$ should be bigger than $\epsilon$ for Theorem \ref{th1} to hold. We did not prove a convergence rate theorem here, but intuitively the rate of convergence should be $n^{-1/2}$. Hence we choose $\delta=\delta_n$ to be bigger then $n^{-1/2}$, say $n^{-1/3}$.

To compute the spatial median for the $\delta$-smoothed posterior for the conditional distribution of $Y$ given $X$, for each value $x$ of $X$, we draw a sample $\Tilde{x}$ from $\mathrm{N}(0,1)$ truncated in $[x-\delta,x+\delta]$, and we draw samples from the posterior distribution of $Y$ given $\Tilde{x}$ following the steps in Section \ref{2S4}. Note that the spatial median of the true error distribution is equal to its center of symmetry, $0$. We generate 5000 samples from the posterior distribution with a burn-in of 500. We report a mean square error (MSE) for the conditional spatial median $\xi$, which is given by
\begin{equation*}
    \text{MSE}=\frac{1}{100} \sum_{i=1}^{100}\Vert \xi^\star(X_i)-\bar{\xi}(X_i)\Vert_2^2,
\end{equation*}
where $\bar{\xi}(x)$ is the posterior mean of the $\delta$-smoothed spatial median of $Y$ given $x$ and $\xi^\star(x)$ is the true conditional spatial median of $y$ given $x$. One well-known competing method is the frequentist linear multivariate median regression proposed by \cite{bai1990asymptotic}, where the regression estimates are obtained by minimizing $\sum_{i=1}^n \Vert Y_i-\alpha-\beta X_i \Vert_2$ with respect to $\alpha$ and $\beta$. Another method is a non-parametric version of bivariate median regression method. For every $x$, we estimate the conditional median $\xi(x)$ by minimizing $\argmin_{\theta \in \mathbb{R}^k}\sum_{i=1}^{100}\Vert y_i-\theta \Vert_2 p_i(x)$, where $$p_j(x)=\frac{K((x-X_j)/h)}{\sum_{i=1}^{100}K((x-X_i)/h)}.$$
The bandwith $h$ has been chosen using cross-validation, and $K$ here is a standard normal kernel. The mean square errors for all three methods are shown in Table \ref{tab1}. Our method gives a lower MSE than the other two methods, thus achieving a gain.
\begin{table}[ht]
\centering
\begin{tabular}{rrrr}
  \hline
Error distribution & NP-Bayes & Frequentist linear &  NP-frequentist\\ 
  \hline
Bivariate $t$ & 9.40 & 17.43 & 14.52\\
Bivariate gamma & 7.29 & 18.02 & 14.98\\
   \hline
\end{tabular}
\caption{MSE's for conditional spatial medians for our method (NP-Bayes), the frequentist linear median regression and the non-parametric median regression (NP-frequentist).}
\label{tab1}
\end{table}
\section{Application to Blood Pressure data}\label{2S7}
We do Bayesian bivariate median regression using our method on a data that appeared in \cite{chakraborty1999multivariate}. This data (shown in Table \ref{tab:Table 2}) was collected by the Biological Sciences Division of the Indian Statistical Institute, Kolkata. This data contains systolic and diastolic blood pressures of 40 Marwari (an Indian ethnic group) females living in the Burrabazar area of Kolkata. 
\begin{table}[ht]
\centering
\begin{tabular}{rrrrrrrr}
  \hline
Serial & Age & Systolic & Diastolic & Serial & Age & Systolic & Diastolic\\ No. & & Pressure & Pressure & No. & & Pressure & Pressure\\ 
  \hline
  2 &  21 & 120 &  88 &  22 &  76 & 160 &  90 \\ 
  3 &  60 & 180 & 100 & 23 &  37 & 110 &  80 \\  
  4 &  38 & 110 &  90 & 24 &  48 & 130 &  90 \\ 
  5 &  19 & 100 &  70 & 25 &  40 & 160 & 112 \\  
  6 &  50 & 170 & 100 & 26 &  36 & 150 &  90 \\   
  7 &  32 & 130 &  84 & 27 &  39 & 140 & 100 \\ 
  8 &  41 & 120 &  80 & 28 &  38 & 110 &  74 \\  
  9 &  36 & 140 &  84 & 29 &  16 & 110 &  70 \\ 
  10 &  57 & 170 & 106 & 30 &  48 & 130 & 100 \\ 
  11 &  52 & 110 &  80 & 31 &  22 & 120 &  80 \\ 
  12 &  19 & 120 &  80 &  32 &  30 & 110 &  70 \\ 
  13 &  17 & 110 &  70 &  33 &  19 & 120 &  80 \\ 
  14 &  16 & 120 &  80 & 34 &  39 & 124 &  84 \\ 
  15 &  67 & 160 &  90 & 35 &  38 & 130 &  94 \\  
  16 &  42 & 130 &  90 & 36 &  45 & 120 &  84 \\  
  17 &  44 & 140 &  90 & 37 &  22 & 130 &  80 \\  
  18 &  56 & 170 & 100 & 38 &  20 & 120 &  86 \\  
  19 &  32 & 150 &  94 &  39 &  18 & 120 &  80 \\ 
  20 &  21 & 140 &  94 & 40 &  31 & 112 &  80 \\  
   \hline
\end{tabular}
\caption{Systolic and Diastolic Blood Pressure of Marwari females in Kolkata}
\label{tab:Table 2}
\end{table}
Our objective is to model the relationship between the systolic and diastolic blood pressures and age for a normal Marwari female living in Kolkata. As demonstrated by \cite{chakraborty1999multivariate} (Also in Figure \ref{fig: fig1}, 
\begin{figure}[ht]
\centering     
\subfigure[Systolic Pressure]{\scalebox{0.3}{\includegraphics{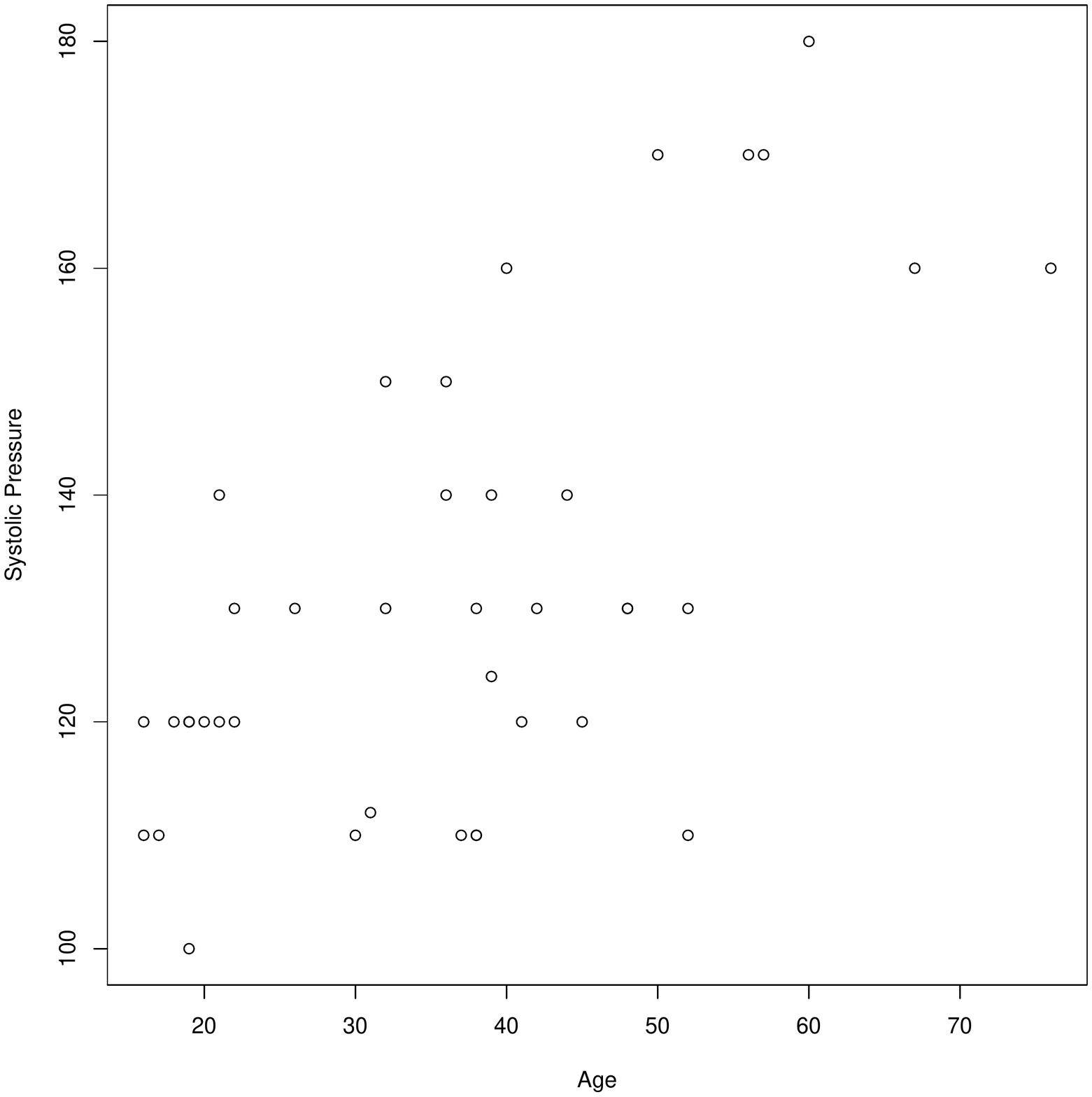}}}
\subfigure[Diastolic Pressure]{\scalebox{0.3}{\includegraphics{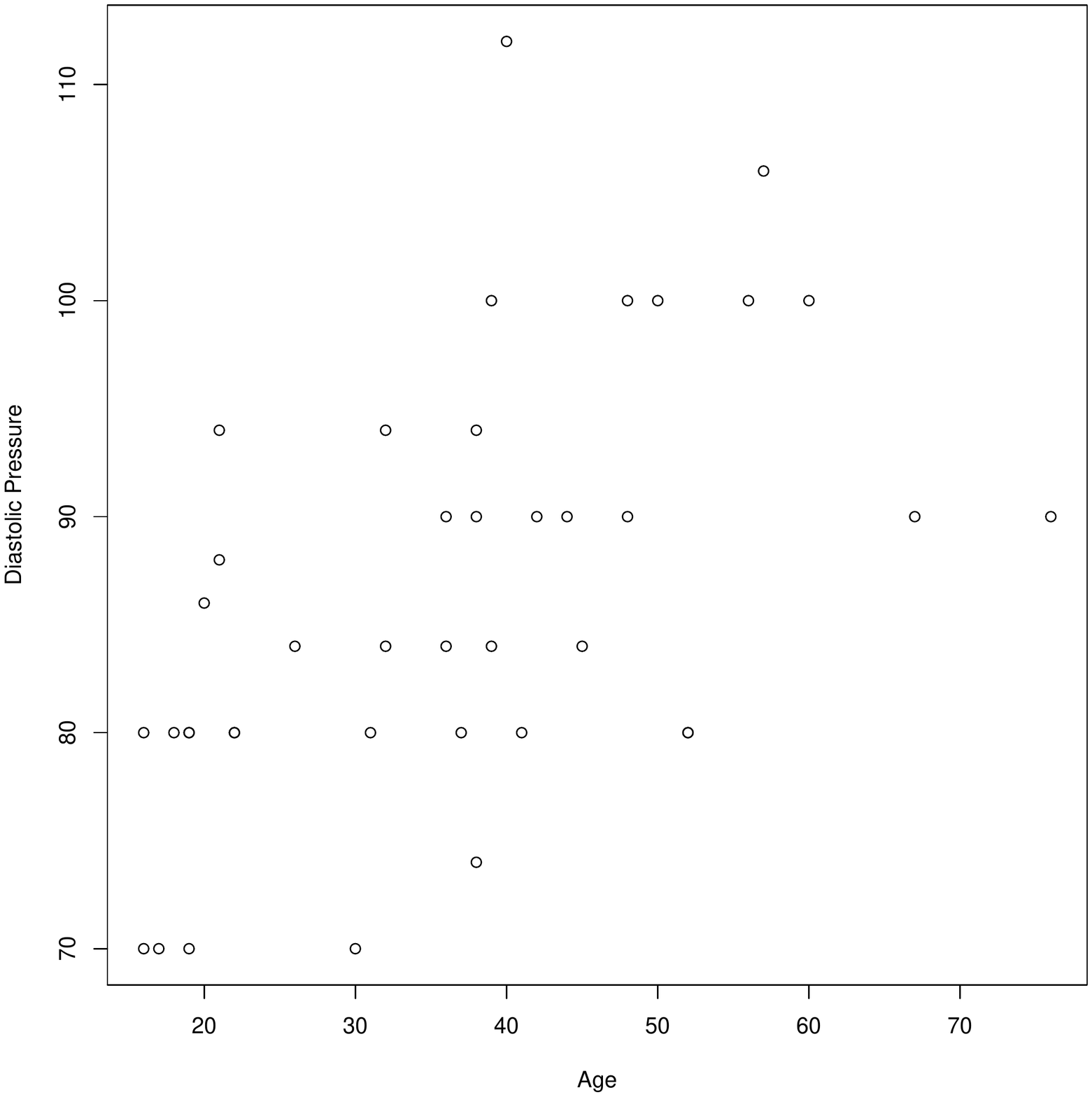}}}
\caption{The Systolic and Diastolic Blood Pressure Against Age of 40 Marwari Females in Kolkata, India.}
\label{fig: fig1}
\end{figure}
which shows the scatterplot of the blood pressure values against age), the data has very high spread and a few outliers. Hence the mean regression is not very efficient, because the mean is sensitive to outliers. Thus a median regression would be appropriate in this situation.

We again model the conditional spatial median of $Y$=(Systolic Pressure ($Y_1$), Diastolic Pressure ($Y_2$)) against $X$=age. For computing the $\delta$-smoothed posterior for the conditional distribution of $Y$ given $X$, we need to sample from the distribution of $X$ truncated in the interval $[x-\delta,x+\delta]$ for every value $x$ of $X$. To choose $c_0$ and $c_1$, we run linear regressions $Y_1$ on $X$ and $Y_2$ on $X$ separately. Of course, the distribution of $X$ is also unknown, so we estimate the density of $X$ using a Gaussian kernel. Based on the regression coefficients, we choose $c_0=(100,73)$ and $c_1=(0.8,0.35)$. We draw 20000 samples from the posterior distribution with a burn-in of 1000, and in Table \ref{tab:tab3}, we show the estimated spatial medians for selected age values along with their coordinate-wise 95\% credible intervals, constructed from the posterior samples. In Figure \ref{fig:2}, we plot the conditional quantiles as a function of $X$, and it shows that both functions increase with $X$, which is expected. In Figure \ref{fig: fig2}, we plot the two components of the spatial median against each other, with $X$=age as labels. This plot also shows an upward trend as well, i.e., diastolic pressure increases as the systolic pressure increases. It also shows that both pressures increase with age, which supports our conclusion from Figure $\ref{fig:2}$.
\begin{table}[ht]
\centering
\begin{tabular}{rrrr}
  \hline
 {Serial No.}& Age & Spatial Median $(Y_1)$ & Spatial Median $(Y_2)$  \\ 
  \hline
  2 & 21 & 120.18 (118.95,\ 121.39) & 90.50 (89.41,\ 91.66) \\ 
  3 & 26 & 130.42 (128.91,\ 131.80) & 100.79 (99.44,\ 102.01)\\ 
  4 & 31 & 116.92 (115.86,\ 118.32)  & 89.48 (86.07, 88.65)\\ 
  5 & 36 & 116.85 (115.91,\ 117.74) & 87.25 (85.84, 88.73)\\ 
  6 & 41 & 115.66 (114.55,\ 116.84) & 85.95 (84.76, 87.04)\\ 
  7 & 46 & 132.10 (131.08,\ 133.39) & 85.24 (84.65,  85.92)\\ 
  8 & 51 & 140.62 (139.60,\ 141.91) & 90.05 (89.46, 90.74)\\ 
  9 & 56 & 117.79 (116.56,\ 118.97) & 87.22 (86.02, 88.31)\\ 
  10 & 61 & 152.30 (151.28,\ 153.59) & 96.19 (95.59, 96.87)\\ 
  11 & 66 & 153.89 (152.87, 155.18) & 93.76 (93.17, 94.44)\\ 
  12 & 71 & 160.91 (159.89, 162.20) & 96.45 (95.86, 97.13)\\
  13 & 76 & 163.19 (162.18, 164.49) & 95.57 (94.97, 96.25)\\
   \hline
\end{tabular}
\caption{Spatial medians for Systolic $(Y_1)$ and Diastolic $(Y_2)$ pressure along with their coordinate-wise 95\% credible intervals (in parenthesis) against selected $x$ age values of Marwari females in Kolkata.}
\label{tab:tab3}
\end{table}
\begin{figure}[ht]
\centering     
\subfigure[Systolic Pressure $(Y_1)$]{\scalebox{0.3}{\includegraphics{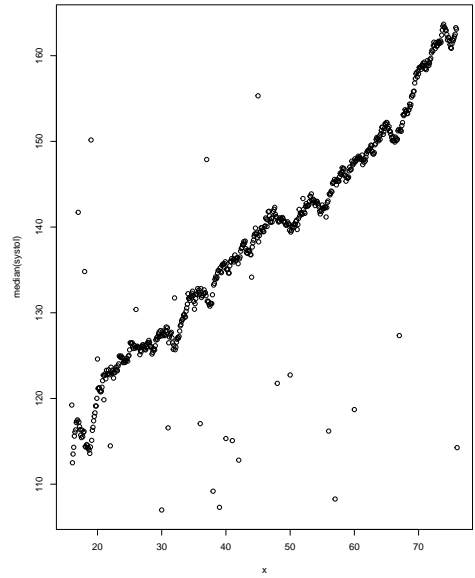}}}
\subfigure[Diastolic Pressure $(Y_2)$]{\scalebox{0.3}{\includegraphics{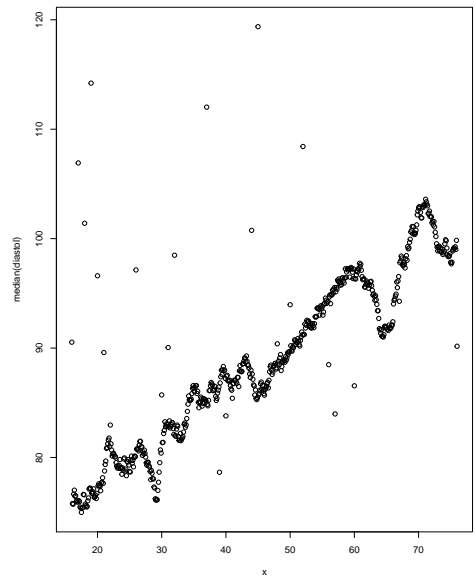}}}
\caption{Conditional spatial median for Systolic pressure $(Y_1)$ and Diastolic pressure $(Y_2)$ as a function of age $(X)$.}
\label{fig:2}
\end{figure}

\begin{figure}[ht]
\includegraphics[scale=0.5,height=80mm,width=80mm]{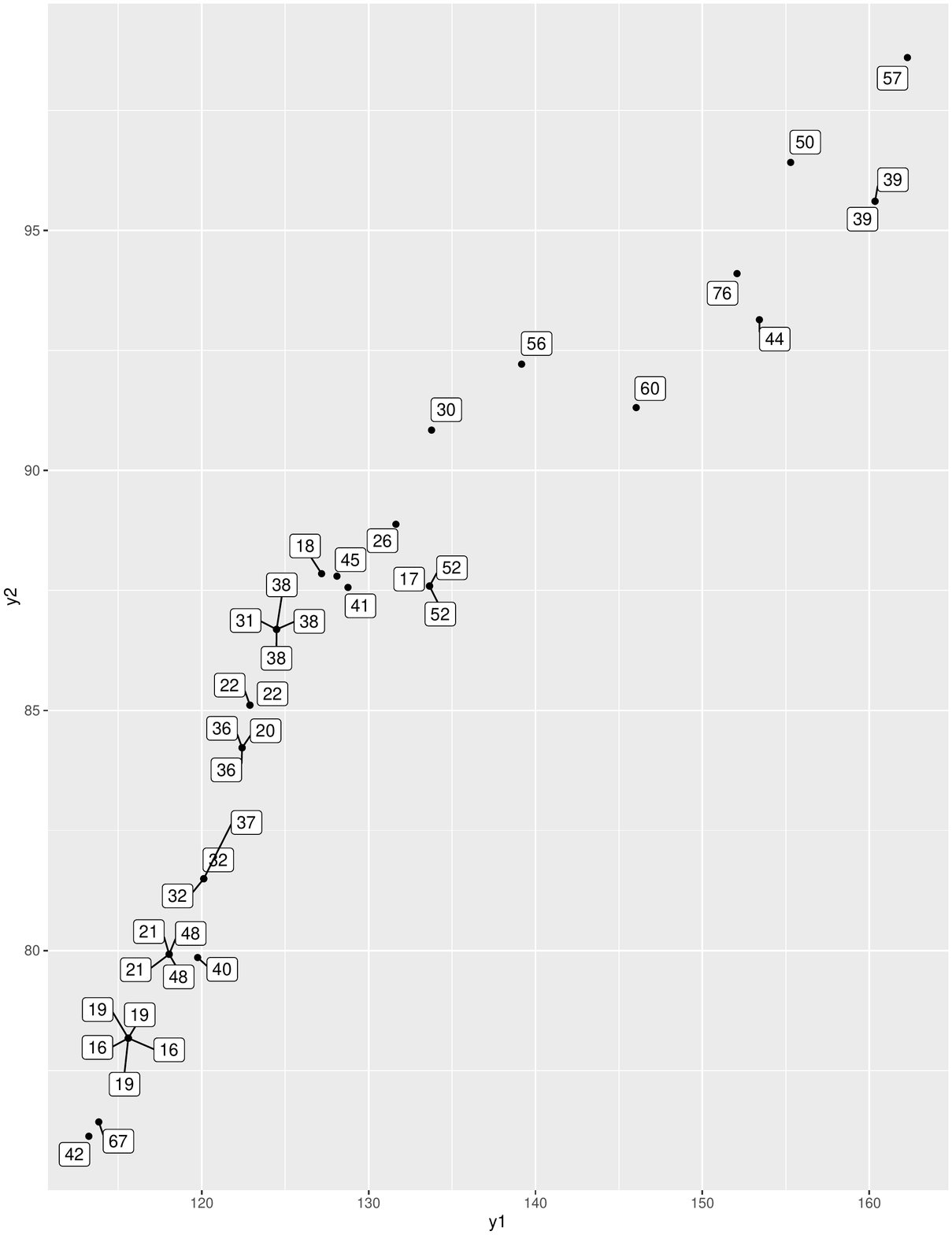}
\caption{Conditional spatial medians $(y_1,y_2)$ for Systolic and Diastolic Blood Pressure of 40 Marwari Females in Kolkata, India, with their age as labels.}
\label{fig: fig2}
\end{figure}
\begin{remark}
Here we have considered Bayesian non-parametric quantile regression of a $k$-dimensional response on a univariate predictor, but the method can be extended to a general $m$-dimensional covariate as well. A DDP prior can be constructed in the same way, and a block Gibbs sampler algorithm can be used, but it would be a lot more computationally extensive.
\end{remark}
\begin{remark}
Here we have considered multivariate quantile regression for geometric quantiles which are obtained by minimizing $P_{Y\vert x}\{\Phi_2(u,Y-\theta)-\Phi_2(u,Y)\}$ with $u \in B_2^{(k)}$. The method can be extended to a more general version of geometric quantiles with general $\ell_p$-norm for $p>1$. 
\end{remark}
\section{Proofs}
\label{2S8}
\begin{proof}[Proof of Lemma 1]
The neighborhood in \eqref{eq455} can be written as
\begin{equation}
    \bigg\{G \times Q :\sup_x \bigg\vert \int \gamma(\xi,\eta,\sigma)(x)\mathrm{d}G(\xi)\mathrm{d}Q(\eta,\sigma)-\int \gamma(\xi,\eta,\sigma)(x)\mathrm{d}G^\star(\xi)\mathrm{d} Q^\star(\eta,\sigma)\bigg\vert <\epsilon\bigg\},
\end{equation}
where $\displaystyle\gamma(\xi,\eta,\sigma)(x)=\int g(y)\sigma^{-k}\phi_k\big(\frac{y-\xi(x)-\eta}{\sigma}\big)\mathrm{d}y$. Note that 
\begin{equation}
\label{eq582}
\begin{split}
    \int \gamma(\xi,\eta,\sigma)(x)\mathrm{d}G(\xi)\mathrm{d}Q(\eta,\sigma)=G(D_1)&Q(D_2)\int \gamma(\xi,\eta,\sigma)(x)\mathrm{d}G_{D_1}(\xi)\mathrm{d}Q_{D_2}(\eta,\sigma)\\
    & +\int_{(D_1\times D_2)^c}\gamma(\xi,\eta,\sigma)(x)\mathrm{d}G(\xi)\mathrm{d}Q(\eta,\sigma),
    \end{split}
\end{equation}
where $G_{D_1}$ denotes the measure $G$ restricted and normalized to a compact set $D_1 \subset \mathcal{C}(\mathfrak{X})$, and $Q_{D_2}$ denotes the measure $Q$ restricted and normalized to the compact set $D_2 \subset \mathbb{R}^k\times \mathbb{R}^+$. For every $x \in \mathfrak{X}$, the second term on the right hand side of \eqref{eq582} can be bounded above by $(G\times Q)(D_1\times D_2)^c$. Then
\begin{equation}
\label{eq583}
\begin{split}
    \sup_x &\bigg\vert \int \gamma(\xi,\eta,\sigma)(x)\mathrm{d}G(\xi)\mathrm{d}Q(\eta,\sigma)-\int \gamma(\xi,\eta,\sigma)\mathrm{d}G^\star(\xi)\mathrm{d}Q^\star(\eta,\sigma)\bigg\vert \\
    &\leq \sup_x\bigg\vert \int \gamma(\xi,\eta,\sigma)(x)\mathrm{d}G_{D_1} (\xi)\mathrm{d}Q_{D_2}(\eta,\sigma)-\int \gamma(\xi,\eta,\sigma)(x)\mathrm{d}G^\star(\xi)\mathrm{d}Q^\star(\eta,\sigma)\bigg\vert \\
    &+\bigg\vert \frac{1}{G(D_1)Q(D_2)}-1\bigg\vert \sup_x \bigg\vert \int \gamma(\xi,\eta,\sigma)(x)\mathrm{d}G(\xi)\mathrm{d}Q(\eta,\sigma)\bigg\vert +(G\times Q)(D_1\times D_2)^c\\
    & \leq \sup_x\bigg\vert \int \gamma(\xi,\eta,\sigma)(x)\mathrm{d}G_{D_1} (\xi)\mathrm{d}Q_{D_2}(\eta,\sigma)-\int \gamma(\xi,\eta,\sigma)(x)\mathrm{d}G^\star(\xi)\mathrm{d}Q^\star(\eta,\sigma)\bigg\vert \\
    &+2 \frac{(G\times Q)(D_1\times D_2)^c}{(G\times Q)(D_1\times D_2)}.
\end{split}
\end{equation}
The following lemma (Lemma \ref{l581}) says that the family of functions $(\xi,\eta,\sigma)\mapsto \\\{\gamma(\xi,\eta, \sigma)(x):x \in \mathfrak{X}\}$ is uniformly bounded and equicontinuous. Hence by the Arzela-Ascoli theorem, the family is pre-compact, and hence totally bounded. Hence for any $\epsilon>0$, there exist $x_1,\dots,x_s \in \mathfrak{X}$, such that for every $x \in \mathfrak{X}$, there exist $i=1,\dots,s$ such that 
\begin{equation}\label{eq584}
    \vert \gamma(\xi,\eta,\sigma)(x)-\gamma(\xi,\eta,\sigma)(x_i)\vert < \epsilon,
\end{equation}
for every $(\xi,\eta,\sigma)\in D$. Hence for every $x \in \mathfrak{X}$
\begin{equation}
\label{eq585}
\begin{split}
    \bigg\vert & \int \gamma(\xi,\eta,\sigma)(x)\mathrm{d}G_{D_1}(\xi)\mathrm{d}Q_{D_2}(\eta,\sigma)-\int \gamma(\xi,\eta,\sigma)(x)\mathrm{d}G^\star(\xi)\mathrm{d}Q^\star(\eta,\sigma)\bigg\vert \\
    &\leq \bigg\vert \int \gamma(\xi,\eta,\sigma)(x)\mathrm{d}G_{D_1}(\xi)\mathrm{d}Q_{D_2}(\eta,\sigma)-\int \gamma(\xi,\eta,\sigma)(x_i)\mathrm{d}G_{D_1}(\xi)\mathrm{d}Q_{D_2}(\eta,\sigma)\bigg\vert\\
    & + \bigg\vert \int \gamma(\xi,\eta,\sigma)(x_i)\mathrm{d}G_{D_1}(\xi)\mathrm{d}Q_{D_2}(\eta,\sigma)-\int \gamma(\xi,\eta,\sigma)(x_i)\mathrm{d}G^\star(\xi)Q^\star(\eta,\sigma)\bigg\vert\\
    &+\bigg\vert\int \gamma(\xi,\eta,\sigma)(x_i)\mathrm{d}G^\star(\xi)\mathrm{d}Q^\star(\eta,\sigma)-\int \gamma(\xi,\eta,\sigma)(x)\mathrm{d}G^\star(\xi)Q^\star(\eta,\sigma)\bigg\vert.
    \end{split}
\end{equation}
The first and third terms in the right hand side of \eqref{eq584} are bounded above by $\epsilon$, using \eqref{eq583}. For the second term, note that, since $G^\star(D_1)=Q^\star(D_2)=1$, for every $\epsilon >0$, there exists a weak neighborhood $\mathcal{W}_1^\star$ of $G^\star$ and $\mathcal{W}_2^\star$ of $Q^\star$ respectively in $\mathfrak{M}(\mathcal{C}(\mathfrak{X}))$ and $\mathfrak{M}(\mathbb{R}^k\times \mathbb{R}^+)$ such that for every $G \in \mathcal{W}_1^\star$ and $Q \in \mathcal{W}_2^\star$, $G(D_1)>1-\epsilon$, and $Q(D_2)>1-\epsilon$, and for all $i=1,\dots,s$,
\begin{equation}
    \bigg\vert \int \gamma(\xi,\eta,\sigma)(x_i)\mathrm{d}G(\xi)\mathrm{d}Q(\eta,\sigma)-\int \gamma(\xi,\eta,\sigma)(x_i)\mathrm{d}G^\star(\xi)\mathrm{d}Q^\star(\eta,\sigma)\bigg\vert<\epsilon.
\end{equation}
Then for $G\in \mathcal{W}_1^\star$, and $Q \in \mathcal{W}_2^\star$, 
\begin{align*}
    \bigg\vert \int &\gamma(\xi,\eta,\sigma)(x_i)\mathrm{d}G_{D_1}(\xi)\mathrm{d}Q_{D_2}(\eta,\sigma)-\int \gamma(\xi,\eta,\sigma)(x_i)\mathrm{d}G^\star(\xi)\mathrm{d}Q^\star(\eta,\sigma)\bigg\vert\\
    &\leq \bigg\vert \frac{1}{G(D_1)Q(D_2)}-1\bigg\vert + \epsilon\leq \epsilon + \frac{1-(1-\epsilon)^2}{(1-\epsilon)^2} \leq 4\epsilon,
\end{align*}
if $\epsilon < 1-\sqrt{3}/2$. Therefore, the right hand side in \eqref{eq583} is less than $6\epsilon$. Plugging everything in \eqref{eq583}, the right hand side of it can be bounded above by $10\epsilon$. Thus, to show that the left hand side of \eqref{eq583} has positive prior probability, all we need to show is any weak neighborhood $\mathcal{W}_1^\star$ of $G^\star$ and $\mathcal{W}_2^\star$ of $Q^\star$ have positive prior probability. The measure $G$ has a $\mathrm{DP}(M_1G_0)$ prior with $G_0$ being a Gaussian process, having full support on $\mathcal{C}(\mathfrak{X})$. Similarly, $Q$ has a $\mathrm{DP}(M_2Q_0)$ prior, where $Q_0$ is the product measure of a $k$-dimensional Gaussian and an inverse gamma distribution, which also has a full support on $\mathbb{R}^k\times \mathbb{R}^+$. Thus, by Lemma 3.6 in \cite{ghosal2017fundamentals}, the weak neighborhoods have positive prior probability.
\end{proof}
\begin{lemma}{\label{l581}}
Define $\displaystyle\gamma(\xi,\eta,\sigma)(x)=\int g(y)\sigma^{-k}\phi_k\big(\frac{y-\xi(x)-\eta}{\sigma}\big)\mathrm{d}y$, where $g: \mathbb{R}^k \to [0,1]$ is bounded and continuous. Then the family of maps $(\xi,\eta,\sigma) \mapsto \{\gamma(\xi,\eta,\sigma)(x): x \in \mathfrak{X}\}$ is uniformly equicontinuous as a family of functions of $(\xi,\eta,\sigma)$ on the compact metric space $D=D_1\times D_2$, i.e., for all $x \in \mathfrak{X}$, and all $\Vert (\xi,\eta,\sigma)-(\xi^{\prime},\eta^{\prime}, {\sigma^{\prime}})\Vert < \delta$, we have 
\begin{equation}
\label{eq587}
\vert \gamma(\xi,\eta,\sigma)(x)-\gamma(\xi^\prime,\eta^\prime,\sigma^\prime)(x) \vert < \epsilon.
\end{equation} 
\end{lemma}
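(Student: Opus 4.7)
The plan is to reduce the claim to the uniform continuity of $g$ on a single large compact ball via a standard bulk/tail split of the Gaussian integral, using compactness of $D=D_1\times D_2$ to keep all arguments of $g$ inside that ball uniformly in $x$.

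First, I would perform the change of variables $y=\xi(x)+\eta+\sigma z$ to rewrite
\[
\gamma(\xi,\eta,\sigma)(x)=\int_{\mathbb{R}^k}g\bigl(\xi(x)+\eta+\sigma z\bigr)\phi_k(z)\,dz,
\]
which is much more convenient because the Gaussian weight no longer depends on the parameters. Next, I would extract quantitative constants from the compactness hypotheses. By Arzel\`a–Ascoli, compactness of $D_1\subset\mathcal{C}(\mathfrak{X})$ yields $M_1<\infty$ with $\sup_{\xi\in D_1}\sup_{x\in\mathfrak{X}}|\xi(x)|\le M_1$; compactness of $D_2\subset\mathbb{R}^k\times\mathbb{R}^+$ yields $M_2<\infty$ and $0<\sigma_{\min}\le\sigma_{\max}<\infty$ bounding $|\eta|$ and $\sigma$ on $D_2$. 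Crucially, these bounds depend only on $D$ and not on $x$.

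Given $\epsilon>0$, I would then pick $R>0$ large enough that $\int_{|z|>R}\phi_k(z)\,dz<\epsilon/2$. Since $g\in[0,1]$, the contribution to $|\gamma(\xi,\eta,\sigma)(x)-\gamma(\xi',\eta',\sigma')(x)|$ from the tail $\{|z|>R\}$ is at most $\epsilon/2$ regardless of parameters. On the bulk region $\{|z|\le R\}$, both arguments $\xi(x)+\eta+\sigma z$ and $\xi'(x)+\eta'+\sigma' z$ lie in the compact set $K=\{w\in\mathbb{R}^k:|w|\le M_1+M_2+\sigma_{\max}R\}$, on which $g$ is uniformly continuous; hence there exists $\delta_1>0$ with $|g(w)-g(w')|<\epsilon/2$ whenever $w,w'\in K$ and $|w-w'|<\delta_1$.

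To finish, I would use the pointwise bound
\[
\bigl|(\xi(x)+\eta+\sigma z)-(\xi'(x)+\eta'+\sigma' z)\bigr|\le \|\xi-\xi'\|_\infty+|\eta-\eta'|+R\,|\sigma-\sigma'|
\]
and choose $\delta>0$ so that $\|(\xi,\eta,\sigma)-(\xi',\eta',\sigma')\|<\delta$ forces the right-hand side below $\delta_1$. Combining the bulk and tail contributions gives \eqref{eq587}, and since the only $x$-dependence enters through $\xi(x)$ and is absorbed by $\|\xi-\xi'\|_\infty$, the chosen $\delta$ is independent of $x\in\mathfrak{X}$, which is precisely the uniform equicontinuity claim. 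The main (and only mildly delicate) obstacle is preventing $\sigma z$ from escaping a fixed compact set as $z$ ranges over $\mathbb{R}^k$; this is exactly what motivates the tail/bulk split, and once $R$ is fixed, the remaining estimates are routine.
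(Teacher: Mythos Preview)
Your argument is correct, but it follows a different route from the paper's proof. The paper does not change variables or split the integral. Instead, since $0\le g\le 1$, it bounds $\vert\gamma(\xi,\eta,\sigma)(x)-\gamma(\xi',\eta',\sigma')(x)\vert$ by the total variation distance between the two Gaussian laws $\mathrm{N}_k(\xi(x)+\eta,\sigma^2 I_k)$ and $\mathrm{N}_k(\xi'(x)+\eta',\sigma'^2 I_k)$, and then invokes a Lipschitz bound for this distance in the location and scale parameters (citing Ghosal, Ghosh and Ramamoorthi, 1999) to obtain directly
\[
\vert\gamma(\xi,\eta,\sigma)(x)-\gamma(\xi',\eta',\sigma')(x)\vert \le \|\xi(x)+\eta-\xi'(x)-\eta'\|+|\sigma-\sigma'|\le \|\xi-\xi'\|_\infty+\|\eta-\eta'\|_2+|\sigma-\sigma'|.
\]
This is shorter and in fact yields a quantitative (Lipschitz) modulus, not just equicontinuity, but it leans on an external fact about Gaussian families and implicitly on $\sigma$ being bounded away from zero on $D_2$. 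Your change-of-variables plus bulk/tail argument is more self-contained: it uses only the uniform continuity of $g$ on a fixed compact ball and elementary Gaussian tail bounds, at the cost of a slightly longer write-up and a non-explicit $\delta$. Either approach proves the lemma as stated.
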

\begin{proof}
For this proof, we borrow some ideas from the proof of Theorem 3 in \cite{ghosal1999posterior}. Using the fact that $0 \leq g(\cdot) \leq 1$, for each $x \in \mathfrak{X}$, the left hand side of \eqref{eq587} can be bounded as
\begin{equation}\label{eq588}
\begin{split}
    \bigg\vert \int g(y)\sigma^{-k}&\phi_k\bigg(\frac{y-\xi(x)-\eta}{\sigma}\bigg)\mathrm{d}y-\int g(y){\sigma^\prime}^{-k}\phi_k\bigg(\frac{y-\xi^\prime(x)-\eta^\prime}{\sigma^\prime}\bigg)\mathrm{d}y\bigg\vert\\
    &\leq \Vert \xi(x)+\eta-\xi^\prime(x)-\eta^\prime \Vert +\vert \sigma-\sigma^\prime \vert\\
    &\leq \Vert \xi -\xi^\prime \Vert_\infty + \Vert \eta-\eta^\prime \Vert_2+\vert \sigma-\sigma^\prime \vert.
    \end{split}
\end{equation}
The last inequality follows from the Lipschitz continuity of $\phi_k(\cdot)$ as a function of $(\xi,\eta,\sigma)$, which gives us the conclusion.
\end{proof}
\begin{proof}[Proof of Lemma \ref{l1}]
Note that $\mathrm{KL}(h^\star,h)$ can be decomposed as
\begin{equation}{\label{eq3}}
\begin{aligned}
    \mathrm{KL}(h^\star,h)=
    \int_{\mathfrak{X}}\int_{\mathcal{K}}f^\star(y\vert x)\log\frac{f^\star(y\vert x)}{f(y \vert x)}\mathrm{d}yq(x)\mathrm{d}x+ \int_{\mathfrak{X}}\int_{\mathcal{K}^c}f^\star(y\vert x)\log\frac{f^\star(y\vert x)}{f(y \vert x)}\mathrm{d}yq(x)\mathrm{d}x,
    \end{aligned}
\end{equation}
where $\mathcal{K}=\{y:\Vert y \Vert_2 \leq K\}$, for some $K>0$. First, we show that the second term in the RHS of \eqref{eq3} is sufficiently small. Note that
\begin{align*}
  \int_{\mathfrak{X}}\int_{ \mathcal{K}^c}f^\star(y\vert x)\log\frac{f^\star(y\vert x)}{f(y \vert x)}\mathrm{d}y & q(x)\mathrm{d}x\\
  \leq   \int_{\mathfrak{X}}\int_{\mathcal{K}^c}f^\star(y\vert x)&\log\frac{{\underset{(\xi,\eta,\sigma) \in D}{\sup}\frac{1}{\sigma^k}\phi_k\big(\frac{y-\xi(x)-\eta}{\sigma}\big)}}{\underset{(\xi,\eta,\sigma) \in D}{\inf}\frac{1}{\sigma^k}\phi_k(\frac{y-\xi(x)-\eta}{\sigma})G^\star(D_1)Q^\star(D_2)}\mathrm{d}yq(x)\mathrm{d}x,
\end{align*}
where $D_1$ and $D_2$ are compact metric spaces. For $\xi \in D_1$, $\displaystyle\sup_x \Vert \xi(x) \Vert < b^\star$ for some $b^\star >0$. Also, for $(\eta,\sigma) \in D_2$, $\Vert \eta \Vert_2 < a^\star$, and $\barbelow{\sigma} <\sigma< \bar{\sigma}$, for $a^\star,\barbelow{\sigma},\bar{\sigma}>0$. For $\Vert y \Vert_2 > K > a^\star+b^\star$,
$$
\log\inf_{(\xi,\eta,\sigma) \in D}\frac{1}{\sigma^k}\phi_k\bigg(\frac{y-\xi(x)-\eta}{\sigma}\bigg)=\log\bigg\{\frac{1}{\barbelow{\sigma}^k}\phi_k\bigg(\frac{y+(a^{\star}+b^{\star})\frac{y}{\Vert y \Vert}}{\barbelow{\sigma}^k}\bigg)\bigg\}.
$$
Let $\mathscr{V}=\{G \times Q: (G\times Q) (D)>\barbelow{\sigma}^k/\bar{\sigma}^k\}$. Since $(G^\star \times Q^\star)(D)=1$, and $D$ is an open set, $\mathscr{V}$ contains a neighborhood of $G^\star \times Q^\star$ of the form \eqref{eq455}. Thus for every $G \times Q \in \mathscr{V}$,

\begin{align*}
    \int_{\mathfrak{X}}&\int_{\mathcal{K}^C}  f^\star(y\vert x)\log \frac{f^\star(y\vert x)}{f(y\vert x)}\mathrm{d}yq(x)\mathrm{d}x \\
    &\leq \int_{\mathfrak{X}}\int_{ \mathcal{K}^c}  f^\star(y \vert x)\log \frac{{\underset{(\xi,\eta,\sigma) \in D}{\sup}\phi_k\big(\frac{y-\xi(x)-\eta}{\sigma}\big)}}{\underset{(\xi,\eta,\sigma )\in D}{\inf}\phi_k\big(\frac{y-\xi(x)-\eta}{\sigma}\big)}\mathrm{d}yq(x)\mathrm{d}x\\
    &=\int_{\mathfrak{X}}\int_{\mathcal{K}^c}f^\star(y\vert x)\log \frac{\phi_k\big(\frac{y+(a^\star+b^\star)\frac{y}{\Vert y \Vert_2}}{\bar{\sigma}}\big)}{\phi_k\big(\frac{y-(a^\star+b^\star)\frac{y}{\Vert y \Vert_2}}{\barbelow{\sigma}}\big)}\mathrm{d}yq(x)\mathrm{d}x\\
    =&\int_{\mathfrak{X}}\int_{\mathcal{K}^c}\Big\{-\frac{1}{2\bar{\sigma}^k}\Big\Vert y+(a^\star+b^\star)\frac{y}{\Vert y \Vert_2} \Big\Vert_2^2+\frac{1}{2\bar{\sigma}^k}\Big\Vert y-(a^\star+b^\star )\frac{y}{\Vert y \Vert_2}\Big\Vert_2^2\}f^\star(y\vert x)\mathrm{d}yq(x)\mathrm{d}x\\
    =& \int_{\mathfrak{X}}\int_{\mathcal{K}^c} \{-\frac{1}{2\bar{\sigma}^k}\Big\Vert \Vert y \Vert_2+(a^\star+b^\star) \big\Vert_2^2+\frac{1}{2\bar{\sigma}^k}\big\Vert \Vert y\Vert_2-(a^\star+b^\star )\Big\Vert_2^2\}f^\star(y\vert x)\mathrm{d}yq(x)\mathrm{d}x.
\end{align*}
Since $f^\star$ is of the form \eqref{eq72}, with $\xi$ being uniformly bounded on $D_1$, and $(\eta,\sigma)$ being bounded on $D_2$, for every $\epsilon > 0$, we can find a compact set $\mathcal{K}$ such that 
\begin{equation}
 \int_{\mathfrak{X}}\int_{ \mathcal{K}^C}  f^\star(y\vert x)\log \frac{f^\star(y\vert x)}{f(y\vert x)}\mathrm{d}yq(x)\mathrm{d}x < \frac{\epsilon}{2}.   
\end{equation}
Next, we show that,
   \begin{equation}
 \int_{\mathfrak{X}}\int_{\mathcal{K}}  f^\star(y\vert x)\log \frac{f^\star(y\vert x)}{f(y\vert x)}\mathrm{d}yq(x)\mathrm{d}x < \frac{\epsilon}{2}.    
\end{equation}
Following the arguments in Lemma \ref{l581}, it can be shown that the family of maps $(\xi,\eta,\sigma) \mapsto \big\{\frac{1}{\sigma^k}\phi_k\big(\frac{y-\xi(x)-\eta}{\sigma}\big):y \in \mathcal{K},x\in \mathfrak{X})\big\}$ is uniformly equicontinuous on $D$. Thus, the family is uniformly bounded on $D$, and pre-compact by Arzela-Ascoli theorem. Hence there exist $x_i,y_i,\ i=1,\dots,m$ such that, for any $y \in \mathcal{K}$, $x \in \mathfrak{X}$,
\begin{equation}\label{eq78}
\underset{(\xi,\eta,\sigma) \in D}{\sup} \big\vert \sigma^{-k}\phi_k(\frac{y-\xi(x)-\eta}{\sigma})-\sigma^{-k}\phi_k(\frac{y_i-\xi(x_i)-\eta}{\sigma})\big\vert < c^\star\delta,
\end{equation}
where $\displaystyle c^\star=\sup_{x \in \mathfrak{X},y \in \mathcal{K}}\sup_{(\xi,\eta,\sigma)\in D}\big\vert \sigma^{-k}\phi_k\big(\frac{y_i-\xi(x_i)-\eta}{\sigma}\big)\big\vert$. Define 
\begin{equation}
    \begin{split}
\mathscr{U}=\{G \times Q:\vert & \int_D \sigma^{-k}\phi_k\big(\frac{y-\xi(x)-\eta}{\sigma})\mathrm{d}G^{\star}(\xi\big)\mathrm{d}Q^\star(\eta,\sigma)-\\
&\int_D\sigma^{-k}\phi_k\big(\frac{y_i-\xi(x_i)-\eta}{\sigma}\big)\mathrm{d}G(\xi)\mathrm{d}Q(\eta,\sigma)\vert <c^\star\delta,\ i=1,\dots,m\}.
\end{split}
\end{equation}
Then $\mathscr{U}$ is a finite intersection of neighborhoods of $G^\star \times Q^\star$ of the form \eqref{eq455}. Since $\mathrm{supp}(G \times Q) \subset D$, 
\begin{align*}
  \int_{\mathfrak{X}}\int_{ \mathcal{K}}  f^\star(y\vert x)&\log \frac{f^\star(y\vert x)}{f(y\vert x)}\mathrm{d}yq(x)\mathrm{d}x <\\
  &\int_{\mathfrak{X}}\int_{ \mathcal{K}}  f^\star(y\vert x)\log \frac{\int_D \phi_k (\frac{y-\xi(x)-\eta}{\sigma})\mathrm{d}G^\star(\xi)\mathrm{d}Q^\star(\mu,\sigma)}{\int_D \phi_k (\frac{y-\xi(x)-\eta}{\sigma})\mathrm{d}G(\xi)\mathrm{d}Q(\mu,\sigma)}\mathrm{d}yq(x)\mathrm{d}x .
\end{align*}
Without loss of generality, we assume $(G^\star \times Q^\star)(\partial D)=0$, where $\partial X$ denotes the boundary of the set $X$. For any $(G\times Q) \in \mathscr{U}$, $y \in \mathcal{K}$ and $x \in \mathfrak{X}$, denoting $\displaystyle g_1(y,x,\xi,\eta,\sigma)=\sigma^{-k}\phi_k \bigg(\frac{y-\xi(x)-\eta}{\sigma}\bigg)$,
\begin{equation}\label{eq77}
\begin{aligned}
    &\bigg\vert \int_D g_1(y,x,\xi,\eta,\sigma)\mathrm{d}G^{\star}(\xi)\mathrm{d}Q^\star(\eta,\sigma)-
   \int_D g_1(y,x,\xi,\eta,\sigma)\mathrm{d}G(\xi)\mathrm{d}Q(\eta,\sigma)\bigg\vert\\
   &\leq \bigg\vert \int_D g_1(y,x,\xi,\eta,\sigma)\mathrm{d}Q^\star(\eta,\sigma)-
   \int_{D} g_1(y_i,x_i,\xi,\eta,\sigma)\mathrm{d}G^{\star}(\xi)\mathrm{d}Q^\star(\eta,\sigma)\bigg\vert\\
  & + \bigg\vert \int_D g_1(y_i,x_i,\xi,\eta,\sigma)\mathrm{d}G^{\star}(\xi)\mathrm{d}Q^\star(\eta,\sigma)-
   \int_D g_1(y_i,x_i,\xi,\eta,\sigma)\mathrm{d}G(\xi)\mathrm{d}Q(\eta,\sigma)\bigg\vert\\
   &+ \bigg\vert \int_D g_1(y_i,x_i,\xi,\eta,\sigma)\mathrm{d}G(\xi)\mathrm{d}Q(\eta,\sigma)-
   \int_D g_1(y,x,\xi,\eta,\sigma)\mathrm{d}G(\xi)\mathrm{d}Q(\eta,\sigma)\bigg\vert
\end{aligned}
\end{equation}
The first and third terms on the right hand side of \eqref{eq77} are each less than $c^\star\delta$ by \eqref{eq78}. The second term is also less than $c^\star\delta$, since $(G\times Q) \in \mathscr{U}$. Thus,
\begin{equation}
    \begin{aligned}
    \bigg\vert \int_D \frac{1}{\sigma^k} \phi_k\bigg(\frac{y-\xi(x)-\eta}{\sigma}\bigg)\mathrm{d}G^{\star}(\xi)\mathrm{d}Q^\star(\eta,\sigma)-
   \int_D \frac{1}{\sigma^k} \phi_k\bigg(\frac{y-\xi(x)-\eta}{\sigma}\bigg)\mathrm{d}G(\xi)\mathrm{d}Q(\eta,\sigma)\bigg\vert\\
   < 3c^\star\delta.
    \end{aligned}
\end{equation}
Therefore, for $(G \times Q) \in \mathscr{U}$,
\begin{equation}
    \bigg\vert\frac{\int_D \frac{1}{\sigma^k}\phi_k\big(\frac{y-\xi(x)-\eta}{\sigma}\big)\mathrm{d}G^\star(\xi)\mathrm{d}Q^\star(\eta,\sigma)}{\int_D \frac{1}{\sigma^k}\phi_k\big(\frac{y-\xi(x)-\eta}{\sigma}\big)\mathrm{d}G(\xi)\mathrm{d}Q(\eta,\sigma)}-1\bigg\vert
< \frac{3\delta}{1-3\delta},
\end{equation}
for $\delta < \frac{1}{3}$. Thus, by choosing $\delta$ small enough
\begin{align*}
    \int_{\mathfrak{X}}\int_{ \mathcal{K}}  f^\star(y\vert x)&\log \frac{f^\star(y\vert x)}{f(y\vert x)}\mathrm{d}yq(x)\mathrm{d}x \leq \\ &\sup_{x\in \mathfrak{X},y \in \mathscr{K}}\bigg\vert\frac{\int_D \frac{1}{\sigma^k}\phi_k\big(\frac{y-\xi(x)-\eta}{\sigma}\big)\mathrm{d}G^\star(\xi)\mathrm{d}Q^\star(\eta,\sigma)}{\int_D \frac{1}{\sigma^k}\phi_k\big(\frac{y-\xi(x)-\eta}{\sigma}\big)\mathrm{d}G(\xi)\mathrm{d}Q(\eta,\sigma)}-1\bigg\vert
    < \frac{\epsilon}{2},
    \end{align*}
for $G\times Q \in \mathscr{U}$. Thus for any $\epsilon>0$ and $G \times Q \in \mathscr{V} \cap \mathscr{U}$,
\begin{equation}
     \int_{\mathfrak{X}}\int_{\mathbb{R}^k}  f^\star(y\vert x)\log \frac{f^\star(y\vert x)}{f(y\vert x)}\mathrm{d}yq(x)\mathrm{d}x < \epsilon.
\end{equation}
Thus Lemma \ref{l1} is proved.
\end{proof}
\begin{proof}[Proof of Theorem \ref{th1}]
Using Example 6.20 in \cite{ghosal2017fundamentals}, Theorem \ref{th1} implies that the posterior is weakly consistent at $f^\star$, i.e., for any $W_{\epsilon,g}(f^\star)$
\begin{equation}
    \Pi\{W_{\epsilon,g}(f^\star)\vert (Y^n,X^n)\} \rightarrow 1.
\end{equation}
The above fact further implies that, for any $\epsilon >0$
\begin{equation}\label{eq4812}
   \Pi\{ \sup_{x,y}\vert F_{X,Y}(x,y)-F_{X,Y}^\star(x,y)\vert<\epsilon \vert (Y^n,X^n)\}\rightarrow 1,
\end{equation}
Thus there exists $\epsilon_n \downarrow 0$ such that \eqref{eq4812} holds with $\epsilon_n$ replacing $\epsilon$. Note that for any $\epsilon>0$, $\delta>0$ and $F$ such that $\sup_{x,y} |F_{X,Y}(x,y)-F_{X,Y}^\star(x,y)|<\epsilon$, we have  
$$
\sup_{x,y}\bigg\vert \frac{P_{X,Y}(\vert X-x \vert \leq \delta, Y \leq y) }{P_X(\vert X-x \vert \leq \delta)}- \frac{P^\star_{X,Y}(\vert X-x \vert \leq \delta, Y \leq y) }{P_X(\vert X-x \vert \leq \delta)}\bigg\vert < \frac{\epsilon}{P_X(\vert X-x \vert \leq \delta)},
$$
Note that $P_X(\vert X-x\vert \leq \delta)=\int_{x-\delta}^{x+\delta}q(x)\mathrm{d}x \geq 2\delta a$, with $a=\underset{x}{\min}\ q(x)$. Choosing a fixed sequence $\delta_n \downarrow 0$ at a rate slower than $\epsilon_n$, 
$$
\sup_{x,y}\bigg\vert \frac{P_{X,Y}(\vert X-x \vert \leq \delta_n, Y \leq y) }{P_X(\vert X-x \vert \leq \delta_n)}- \frac{P_{X,Y}^\star(\vert X-x \vert \leq \delta_n, Y \leq y) }{P_X(\vert X-x \vert \leq \delta_n)}\bigg\vert < \epsilon_n,
$$
for every $n$. Notice that $\displaystyle \lim_{\delta_n \rightarrow 0} \frac{P_{X,Y}^\star(\vert X-x \vert \leq \delta_n, Y \leq y) }{P_X(\vert X-x \vert \leq \delta_n)}\rightarrow F^\star_{Y\vert x}(y)$. For a fixed $u \in B_2^{(k)}$, note that $Q_{\delta;Y\vert x}(u)$ can be written as $\displaystyle Q_{\delta;Y\vert x}(u)= \argmax_\theta \int g(u;y,\theta)\mathrm{d}F_{\delta; Y\vert x}(y)$, where $g(u;y,\theta)$ is defined as
\begin{equation}
    g(u;y,\theta)=-\{\Vert y-\theta \Vert_2+\langle u,y-\theta \rangle -\Vert y \Vert_2 + \langle u,y \rangle\}.
\end{equation}
Since $g(u;y,\theta)$ is a bounded and continuous function in $y$ for every fixed $\theta \in \mathbb{R}^k$, and $x \in \mathfrak{X}$,
\begin{equation}
\begin{split}
    \Pi\bigg\{\vert\int g(u;y,\theta)\mathrm{d}F_{\delta_n; Y\vert x}(y)-\int g(u;y,\theta)\mathrm{d}F^\star_{Y \vert x}(y)\vert < \epsilon\vert (Y^n,X^n)\bigg\}\rightarrow 1\ a.s.,
    \end{split}
\end{equation}
for every $\epsilon>0$. We use the argmax theorem (Theorem 5.7 in \cite{van2000asymptotic}) to achieve the assertion in Theorem \ref{th1}. We need the following two conditions:
\begin{enumerate}
    \item For every $\epsilon>0$ and fixed $u \in B_2^{(k)}$, and for all $x \in \mathfrak{X}$,
    \begin{equation}
    \begin{aligned}
    \Pi\bigg\{\sup_{\theta}\vert \int g(u;y,\theta)\mathrm{d}F_{\delta_n; Y\vert x}(y)-\int g(u;y,\theta)\mathrm{d}F^\star_{Y\vert x}(y)\vert <\epsilon \vert (Y^n,X^n)\bigg\} \rightarrow 1.
    \end{aligned}
    \end{equation}
    \item $\displaystyle \sup_{\theta: \Vert \theta-\theta^\star\Vert_2 \geq \epsilon} \int g(u;y,\theta)\mathrm{d}F^\star_{Y\vert x}(y) < \int g(u;y,Q^\star_{Y\vert X}(u \vert x))\mathrm{d}F^\star_{Y\vert x}(y)$, which is also known as the \enquote{well-separatedness} condition.  
\end{enumerate}
To prove the above conditions, we need to restrict the parameter space to a compact subset of $\mathbb{R}^k$, which leads us to the following lemma, which says that the parameter space can be taken to be a compact set with high probability.
\begin{lemma}{\label{l5}}
For every $x \in \mathfrak{X}$, and for every fixed $u \in B_2^{(k)}$, for every $0<\epsilon < c^{-1}/(c^{-1}+\Vert u \Vert_2+1)$ and $K_x>0$ such that $P_{Y\vert x}^\star(\Vert Y \Vert_2\leq K_x)>1-\epsilon$, the posterior probability of $Q_{Y \vert x}(u) \leq cK_x$ given $(Y^n,X^n)$ tends to 1, a.s. $n \rightarrow \infty$, where $c=3/(1-\Vert u \Vert_2)$.
\end{lemma}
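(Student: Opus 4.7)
The plan is to prove the stronger claim that $\Vert Q_{\delta_n;Y\vert x}(u)\Vert_2\leq cK_x$ with posterior probability tending to $1$, by observing that the geometric-quantile objective $\Psi(\theta,P)=\int\{\Phi_2(u,y-\theta)-\Phi_2(u,y)\}\,\mathrm{d}P(y)$ vanishes at $\theta=0$, so the argmin is forced inside the ball $\{\theta:\Vert\theta\Vert_2\leq cK_x\}$ once one shows $\Psi(\theta,P_{\delta_n;Y\vert x})>0$ for all $\theta$ outside that ball, with high posterior probability. The argument splits into a purely deterministic tail-to-quantile bound, and a posterior-consistency transfer that applies the deterministic bound to the random smoothed conditional.

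For the deterministic step I would write $\Psi(\theta,P)=\int\{\Vert y-\theta\Vert_2-\Vert y\Vert_2\}\,\mathrm{d}P(y)-\langle u,\theta\rangle$ and split the integrand using $\Vert y-\theta\Vert_2-\Vert y\Vert_2\geq\Vert\theta\Vert_2-2\Vert y\Vert_2$ on $\{\Vert y\Vert_2\leq K\}$ and $\geq-\Vert\theta\Vert_2$ (reverse triangle inequality) on the complement. Combined with $P(\Vert Y\Vert_2\leq K)>1-\epsilon$ and $\langle u,\theta\rangle\leq\Vert u\Vert_2\Vert\theta\Vert_2$, this yields $\Psi(\theta,P)\geq\Vert\theta\Vert_2(1-2\epsilon-\Vert u\Vert_2)-2K(1-\epsilon)$. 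The right-hand side is increasing in $\Vert\theta\Vert_2$ provided $\epsilon<(1-\Vert u\Vert_2)/2$, and substituting $\Vert\theta\Vert_2=cK=3K/(1-\Vert u\Vert_2)$ reveals, after direct algebra, that it is strictly positive iff $\epsilon<(1-\Vert u\Vert_2)/(4+2\Vert u\Vert_2)$, which is exactly the stated threshold $c^{-1}/(c^{-1}+\Vert u\Vert_2+1)$.

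To transfer this to the posterior I would invoke the weak consistency \eqref{eq4812} already established via Lemma \ref{l1}, giving $\sup_{x,y}\vert F_{X,Y}-F^\star_{X,Y}\vert<\epsilon_n$ with posterior probability tending to $1$ for some $\epsilon_n\downarrow 0$. Expressing the smoothed conditional as a ratio and using $P_X(\vert X-x\vert\leq\delta_n)\geq 2\delta_n a$ with $a=\min_x q(x)>0$ shows that the mass placed by $P_{\delta_n;Y\vert x}$ outside the ball $\{\Vert y\Vert_2\leq K_x\}$ differs from the corresponding mass under $P^\star_{Y\vert x}$ by at most $O(\epsilon_n/\delta_n)$; choosing $\delta_n$ to shrink slowly relative to $\epsilon_n$ makes this deviation eventually negligible, so the smoothed random measure inherits the tail hypothesis required by the deterministic step. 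The main obstacle is precisely this calibration between $\delta_n$ and $\epsilon_n$, which is the source of the qualifier ``$\delta_n\to 0$ sufficiently slowly'' in Theorem \ref{th1}; a subsidiary technical point is passing from uniform CDF closeness to closeness on the half-space $\{\Vert y\Vert_2\leq K_x\}$, which is standard but requires approximating the indicator by bounded continuous functions or assuming that the sphere of radius $K_x$ is a continuity set of $F^\star_{Y\vert x}$.
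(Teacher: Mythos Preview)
Your proposal is correct and follows essentially the same route as the paper. Both arguments show that the objective $\Psi(\theta,P)=\int\{\Vert y-\theta\Vert_2-\Vert y\Vert_2\}\,\mathrm{d}P(y)-\langle u,\theta\rangle$ is strictly positive once $\Vert\theta\Vert_2\ge cK_x$, forcing the minimizer inside the ball; the only cosmetic difference is that the paper keeps the $\langle u,\theta\rangle$ term inside the integrand and, using $c=3/(1-\Vert u\Vert_2)$, simplifies the on-ball lower bound directly to $\Vert\theta\Vert_2/c$, whereas you subtract $\langle u,\theta\rangle$ globally and bound the on-ball integrand by $\Vert\theta\Vert_2-2K$ before substituting $\Vert\theta\Vert_2=cK$ --- both computations collapse to the identical threshold $c^{-1}/(c^{-1}+\Vert u\Vert_2+1)$. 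Your posterior-transfer paragraph (calibrating $\delta_n$ against $\epsilon_n$ so that the smoothed conditional inherits the tail bound) is in fact more explicit than the paper's, which dispatches this step in a single sentence by citing weak consistency of $F_{\delta_n;Y\mid x}$ at $F^\star_{Y\mid x}$.
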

Proof of Lemma \ref{l5} is given at the end of this proof. Using Lemma \ref{l5}, the parameter space can be taken to be $\Theta$, which is a compact subset of $\mathbb{R}^k$. Condition 1 is proved using Example (A.2) in \cite{bickel1992uniform}. We have to show that, for every $\theta \in \Theta$ with $\Theta$ compact, and $u \in B_2^{(k)}$,
\begin{itemize}
    \item $\displaystyle\sup_y \vert g(u;y,\theta) \vert \leq k_0$
    \item $\displaystyle\sup_y \{\vert g(u;y,\theta)-g(u;y^\prime,\theta)\vert /\Vert y-y^\prime \Vert_2  \} \leq k_0$.
\end{itemize}
The first condition follows from
\begin{align*}
    \vert g(u;y,\theta) \vert \leq \Vert \theta \Vert_2+ \langle u,\theta \rangle \leq 2\Vert \theta \Vert_2 \leq 2cK_x,
\end{align*}
The second condition follows from the Lipschitz continuity of the functions $g(u;y,\theta)$,
\begin{align*}
    \vert g(u;y,\theta)-g(u;y^\prime,\theta) \vert =& \vert \Vert y-\theta \Vert_2-\Vert y^\prime-\theta \Vert_2-\Vert y \Vert_2+\Vert y^\prime \Vert_2 \vert 
    \leq  2\Vert y-y^\prime \Vert_2. 
\end{align*}
Then
$
\sup_y \{\vert g(u;y,\theta)-g(u;y^\prime,\theta)\vert /\Vert y-y^\prime \Vert_2  \} \leq 2.
$
Condition 2 follows from our assumption, which proves Lemma \ref{l1}.
\end{proof}
\begin{proof}[Proof of Lemma \ref{l5}]
Define $M(F^\star_{Y\vert x},\theta)=F^\star_{Y\vert x}\{ \Phi_2(u,Y-\theta)-\Phi_2(u,Y)\}=F^\star_{Y\vert x}(\Vert Y-\theta \Vert_2-\Vert Y \Vert_2-\langle u,\theta \rangle)$. We show that for $0<\epsilon<c^{-1}/(c^{-1}+\Vert u \Vert_2+1)$, there exists $K_x>0$ such that $\Vert \theta \Vert_2 \geq cK_x$ implies $M(F^\star_{Y\vert x},\theta)>0$. If $\Vert Y \Vert_2 \leq K_x$ and $\Vert \theta \Vert_2 \geq cK_x$, then
\begin{equation*}
\Vert Y-\theta \Vert_2 \geq \Vert \theta \Vert_2 - \Vert Y \Vert_2 \geq\frac{(c-1)\Vert \theta \Vert_2}{c}+K-\Vert Y \Vert_2 \geq \frac{\Vert \theta \Vert_2}{c},
\end{equation*}
Hence as $\Vert Y \Vert_2 \leq K_x \leq \Vert \theta \Vert_2/c$,
\begin{equation*}
\Vert Y-\theta \Vert_2-\Vert Y \Vert_2-\langle u,\theta \rangle \geq \frac{(c-1)\Vert \theta \Vert_2}{c}-\frac{\Vert \theta \Vert_2}{c}- \Vert u \Vert_2 \Vert \theta \Vert_2.
\end{equation*}
Using the relation $c=3/(1-\Vert u \Vert_2)$
$$
\Vert Y-\theta \Vert_2-\Vert Y \Vert_2-\langle u,\theta \rangle \geq \frac{\Vert \theta \Vert_2}{c}.
$$
Now since always $\big\vert \Vert Y -\theta \Vert_2 - \Vert Y \Vert_2 -\langle u,\theta \rangle \big\vert \leq (1+\Vert u \Vert_2)\Vert \theta \Vert_2$, we can write
\begin{align*}
M(F_{Y\vert x}^\star,\theta) &= \int_{\Vert Y \Vert_2 \leq K_x}(\Vert Y-\theta \Vert_2-\Vert Y \Vert_2-\langle u,\theta \rangle)\mathrm{d}F_{Y\vert x}^\star+\\ &\qquad\qquad\int_{\Vert Y \Vert_2>K_x}(\Vert Y-\theta \Vert_2-\Vert Y \Vert_2-\langle u,\theta \rangle)\mathrm{d}F_{Y\vert x}^\star\\
&\geq \Vert \theta \Vert_2(\frac{1}{c}F^\star_{Y\vert x}(\Vert Y \Vert_2 \leq K_x)-(1+\Vert u \Vert_2)F^\star_{Y\vert x}(\Vert Y \Vert_2>K_x )\big)\\
&=\Vert \theta \Vert_2\big(\frac{1}{c}-(1+\Vert u \Vert_2+\frac{1}{c})F^\star_{Y\vert x}(\Vert Y \Vert_2>K_x)\big)\\
&\geq \Vert \theta \Vert_2\Big\{\frac{1}{c}-\big(1+\Vert u \Vert_2+\frac{1}{c}\big)\epsilon\Big\}>0.
\end{align*}
Thus, for $u\in B_2^{(k)}$, and every $x\in \mathfrak{X}$, $Q^\star_{Y\vert x}(u) \leq cK_x$, where $K_x$ is chosen such that $P_{Y\vert x}^\star(\Vert Y \Vert_2 \leq K_x)>1-\epsilon$, where $0<\epsilon < c^{-1}/(c^{-1}+\Vert u \Vert_2+1)$. Since the $\delta_n$-smoothed conditional distribution $F_{\delta_n;Y\vert x}$ is weakly consistent at $F^\star_{Y\vert x}$, the posterior probability $Q_{Y\vert x}(u) \leq cK_x$ tends to 1 almost surely.
\end{proof}
\afterpage{\clearpage}
\bibliographystyle{ba}
\bibliography{sample}


\end{document}